\def\ps@headings{%
\def\@oddhead{\mbox{}\scriptsize\rightmark \hfil \thepage}%
\def\@evenhead{\scriptsize\thepage \hfil \leftmark\mbox{}}%
\def\@oddfoot{}%
\def\@evenfoot{}}
\newcommand{\argmax}{\operatornamewithlimits{argmax}}
\newcommand{\argmin}{\operatornamewithlimits{argmin}}
\def\eqd{\,{\buildrel \mathscr{D} \over =}\,}
\def\endremark{\hfill$\square$}
\newtheorem{thm}{Theorem}
\newtheorem{lem}{Lemma}
\renewcommand{\baselinestretch}{1.35}
\date{}
\begin{document}

\title{Dynamic Packet Scheduler  Optimization in  Wireless Relay Networks}

\author{Hussein~Al-Zubaidy, 
         Changcheng Huang, James Yan  \\ 
        SCE, Carleton University, Ottawa,
		ON, K1S 5B6 Canada. \\
	 	e-mail:  \{hussein, huang\}@sce.carleton.ca, jim.yan@sympatico.ca.
        }%


\maketitle

\thispagestyle{empty}

\pagestyle{empty}

\begin{abstract}
In this work, we investigate the optimal dynamic packet scheduling policy in a wireless relay network (WRN). We  model this network by  two sets of parallel queues, that represent the subscriber stations (SS) and the relay stations (RS), with random link connectivity.  An optimal policy  minimizes, in stochastic ordering sense, the process of  cost function of the SS and RS queue sizes.
We prove that, in a system with symmetrical connectivity and arrival distributions, a policy that tries to balance the  lengths of all the system queues, at every time slot, is optimal.  We use stochastic dominance and coupling arguments in our proof. We also provide a low-overhead  algorithm for optimal policy  implementation.
 

\end{abstract}
%
%
\noindent Keywords: Optimal scheduling, wireless relay network, cooperative diversity,  coupling arguments, stochastic ordering, MB policies.



%

\vspace{-2mm}
\section{Introduction}

Fourth generation (4G) wireless systems are high-speed cellular networks with peak download data rates of 100 Mbps. IEEE 802.16j task group recommended the use of relay nodes and  cooperation in 4G networks design \cite{IEEE80216}. In these systems,  dedicated wireless relay nodes are deployed in order  to achieve cooperative diversity. Wireless relays are spread over the coverage range of the cell. 
 They have the effect of increasing coverage  within a cell and facilitating the targeted  data rate for 4G mobile users.
 These relays usually possess limited functionality and have low power consumption. Consequently, they are significantly cheaper than a full-scale base station.  
Early studies of  cooperative communication were  initiated by  \cite{Sendonaris}, \cite{Laneman} and \cite{Nosratinia}. Since then, the subject  attracted the attention of many researchers, cf. \cite{Sendonaris2} \cite{Stankovic}. 

Most of the  existing work in this area aimed at exploiting the diversity and multiplexing gain  to improve some performance criteria, e.g.,  capacity and bandwidth utilization,  outage probability,  error rate,  etc. These are often achieved through the use of adaptive modulation techniques, distributed space-time coding, or error-correction coding. 
%
In this work, we  study this problem 
from a different perspective, the dynamic packet scheduling  perspective. We are interested in the scheduling of packets on the uplink of a wireless relay network (WRN). Each of the subscriber or the relaying nodes is assumed to have  a time-varying channel  that can be modelled as a random process. We present a queueing model that captures the packet buffering,  scheduling and routing processes as well as the intermittent channel connectivity in such network. We then use this model to study dynamic packet scheduling in such networks.

Dynamic packet scheduling enables the  redistribution of the available resources  to improve  network performance. Furthermore, optimal packet scheduling policies can be determined under various operating constraints to optimize various performance criteria. This motivated the investigation of the optimal control problem that we present here. The inherent randomness of the wireless channel and the dynamic configuration of the nodes in  wireless networks create a formidable challenge to such investigation. Therefore,  it is wise  and often necessary in such cases to make  simplifying assumptions that result in  mathematically tractable problem formulations. Otherwise, optimality results may not be attainable.

In this article, we investigate an optimal  dynamic packet scheduling policy in a  wireless relay network (WRN) composed of a base station (BS), $L$ subscriber stations (SS) and $K$ relay stations (RS), for any arbitrary $L$ and $K$. This network is modelled by two sets of  queues with infinite size (see Figure \ref{fig:fig_2}). The wireless channel in such network  is varying with time and can best be  described by a random process. This assumption is widely used in literature, cf. \cite{Tassiulas}, \cite{ganti} and many other. A wireless link  is assumed to be `connected' with probability $p$ and `not connected' otherwise.  We further assume that the connectivity processes are independent across the wireless links. The transmission frame is divided into two halves; during the first half a SS node is scheduled to transmit (to a selected RS) and during the second half a RS node is scheduled to transmit to the base station.
This model can be used to study  scheduling and routing algorithms in multi-hop, wireless networks such as relay assisted, fourth generation wireless networks. 

The optimization cost that we consider in this work is a monotone, non-decreasing function of the system  queue occupancy.  
We prove using stochastic dominance \cite{Stoyan} and coupling arguments \cite{Lindvall}, that a \textit{most balancing} (MB) policy minimizes, in stochastic ordering sense\footnote{Stochastic dominance is a stronger optimization notion than the expected cost minimization since the former implies the later; however, the reverse is not true \cite{ross_stoch_processes}.}, the  cost function random process. 

\begin{figure} 
\centering
\includegraphics [width=3.3in]{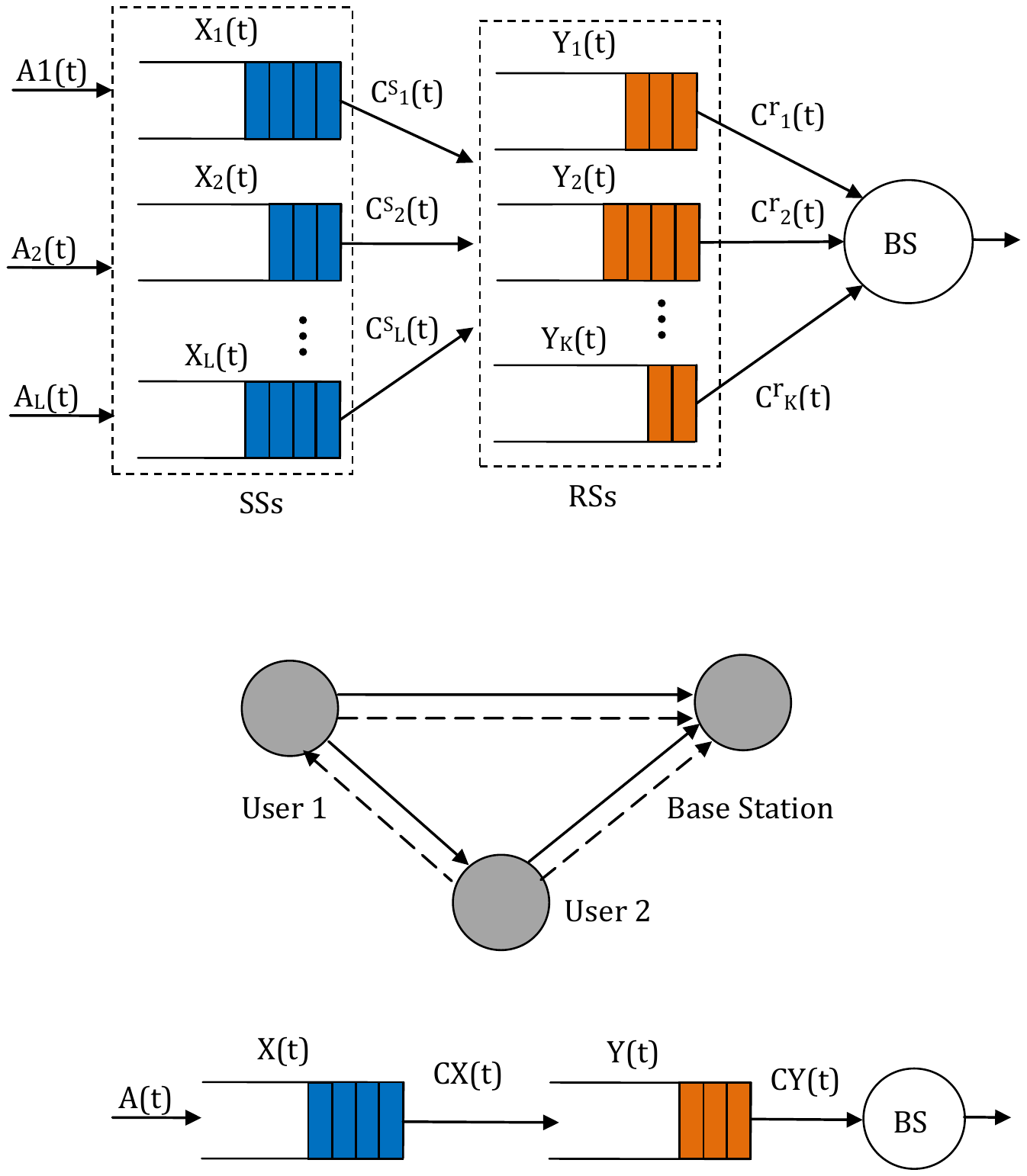}
\caption{A queueing model for dynamic packet scheduling in   wireless relay networks.}
\label{fig:fig_2}
\end{figure}

\subsection{Previous  Work}\label{previous_work}
The  problem we are investigating lies in the area of optimal control in queueing networks.
A related problem was first studied by Roseburg et al in \cite{Roseburg}. They investigated  an optimal control policy of service rate in a system of two queues in tandem. They proved, using dynamic programming argument, that a threshold policy is optimal in that it minimizes the expected cost function of the queue occupancy.  

Another related work to the optimal control problem we are presenting here was reported  by  Tassiulas and Ephremides in \cite{Tassiulas}. They considered a  model of parallel queues with a single server; they showed that a LCQ policy, a policy that allocates the server to its longest connected queue, minimizes the total number of packets in the system. 
The authors in \cite{ganti} studied a satellite node with $K $ transmitters. They modelled the system by a set of parallel queues with symmetrical statistics competing for $K$ identical servers.  At any given time slot in this model: (i) a server is  connected to either all or no queues at all, (ii)  at most one server can be allocated to each scheduled queue. Using stochastic coupling arguments, the authors proved that an LCQ (Longest Connected Queue) policy is optimal.

In previous work \cite{Zubaidy}, we studied the problem of optimal scheduling in a multi-server system of parallel queues  with random queue-server connectivity. We relaxed assumptions (i) and (ii) in  \cite{ganti} above.
We proved, using coupling arguments, that a Most Balancing (MB) policy is optimal in that it minimizes, in stochastic ordering sense, a cost function of the queue sizes in the system. 
%
%


%

In \cite{Shi}, the authors proposed a cooperative multiplexing and scheduling algorithm for a wireless relay network with a single relay node. They showed that this algorithm outperforms the traditional opportunistic techniques in terms of spectral efficiency. The authors in \cite{Hong} studied link scheduling in WRN with bandwidth and delay guarantees. They modelled the system using simple directed graph. They proposed an efficient algorithm to provide delay guarantee over WRN. 
%

For a wireless relay network, the choices of relay node, relay strategy, and the allocation
of power and bandwidth for each user are important design parameters which were investigated thoroughly in recent literature. 
Relay selection and cooperation strategies for relay networks have been investigated by  \cite{Sreng} and \cite{Yu} among others. Power control has been investigated by \cite{Host} and  \cite{Chen} and many others. However,  the modern wireless networks are mostly IP-based, and therefore the optimization problem may be reduced to finding the optimal dynamic packet scheduling policies in these networks.

\subsection{Our Contributions}

In this work, we developed a queueing model to study the  process of packet scheduling in wireless relay network.
Our  main contributions can be summarized by the following:
\begin{itemize}
\item We develop a queueing model to study packet scheduling in WRN, see Figure \ref{fig:fig_2}.

\item We introduce (in Equation \ref{eq:MBpolicy}) and show the existence of the class of optimal scheduling policies (i.e., the MB policies).
We  prove  their optimality  (Theorem \ref{thm:MBoptimality}) for packet scheduling  in this model.  The optimality criterion we use is stochastic ordering and the cost that is minimized belongs to a set of functionals of the SS and RS queue lengths.
  
\item We provide an implementation algorithm for packet scheduling policies in WRN.  We also prove that this algorithm results in a MB policy.
\end{itemize} 

The model we are presenting in this article differs from the previous work (in section \ref{previous_work}) in that it contains two sets  of parallel queues in tandem rather than just one. At every time frame, the scheduler in this case must decide which SS node transmits during the first half of the frame, which RS queue receives the transmitted packets and which RS node transmits (to BS) during the second half of the frame. The task of solving this problem is quite  challenging.  The dependency between the two sets of queues (i.e., SS and RS), mainly the  dependency of the scheduling controls $U_2(t)$  on $U_1(t)$ and $U_3(t)$  on $U_2(t)$ and $U_1(t)$, added more complexity to the solution of this optimization problem. The approach we used in our proof (section \ref{sec:main-result}) addressed this issue rigorously. The model and results  we present here will help provide a sound theoretical ground to the problem of packet scheduling in WRN and can also be used to study  multi-hop wireless networks in general.

%

The rest of this article is organized as follows; in Section II, we present a detailed description of the queueing model under investigation.  In Section III,  we define the class of ``Most Balancing'' policies.  We present   the optimality results  in Section IV. 
Conclusions are given in Section V. Proofs for some of our results are given in the appendix.

\vspace{-2mm}
\section{Model Description} \label{sec:model_description}
We model the WRN  by a discrete-time queueing system
as shown in  Figure \ref{fig:fig_2}.
The objective is to find the optimal dynamic packet scheduling policy 
for this network. The optimal policy is the one that minimizes a cost function of the   queue lengths (to be defined shortly).

In this model, time is slotted into constant intervals each of which is equal to one transmission frame. 
At every time slot, the following sequence of events happen: (a) the  system state (queue sizes and connectivities) is observed, (b) a scheduler action (or control decision) is selected, and (c) the exogenous arrivals are added to their respective SS queues. The scheduler action involves (i) selecting a SS node to transmit to a RS node (denoted   by $U_1(t)$), (ii) selecting the RS node that  the scheduled packet is routed to  (denoted by $U_2(t)$), and (iii) selecting a RS node to transmit to the base station  (denoted by $U_3(t)$). These actions are sequentially executed  with order $U_1(t),U_2(t)$ then $U_3(t)$. A packet that arrives during the current time slot can only be considered for transmission in the subsequent time slots. 
We assume that the  scheduler has complete knowledge of the system state when the decision time arrives. This is a realistic assumption for most infrastructure-based networks since they use a centralized control provided by the base station.  They deploy a dedicated control channel that can be used to communicate such information.  


\subsection{Formulation and Statistical Assumptions}

We define the following notation that we use to describe the model under investigation. Throughout this paper, we will use UPPER CASE, \textbf{bold face} and lower case letters to represent random variables, vector/matrix quantities  and sample values respectively. In our notation, we define two dummy queues, one SS and one RS, that we denote by the index `0'. These queues are used to represent the idling action, i.e., a dummy packet is removed from queue 0 when no real packet from real queue is scheduled for transmission. We assume that the dummy queues have full connectivity at all times and initial sizes of 0. The dummy queues are required in order   to facilitate the mathematical formulation of this optimal control problem. Let $\mathcal L =\{0,1,\ldots,L\}$ (respectively $\mathcal K =\{0,1,\ldots,K\}$) be the set of indices for the SS (respectively RS) stack of queues.
For any time slot $t= 1,2,\ldots$, we define the following:

\begin{itemize}

  \item $\mathbf{X}(t) = ( X_0(t),X_1(t),\ldots, X_L(t))$ is the   queue length vector for SS nodes  (measured in number of packets) at the beginning of time slot $t$, where $X_i(t)\in \{0, 1, 2, \ldots\}$ and $X_0(1)=0$, i.e., we assume that the dummy queue is initially empty.
  
  \item $\mathbf{Y}(t) = ( Y_0(t),Y_1(t),\ldots, Y_L(t))$ is the  queue length vector for RS nodes at the beginning of time slot $t$, where  $Y_i(t)\in \{0, 1, 2, \ldots\}$. We assume that $Y_0(t)=0$ for all $t$. 

  \item $\mathbf{A}(t) = ( A_0(t),A_1(t),\ldots, A_L(t))$, where $ A_i(t)$ is  the number of exogenous arrivals to SS queue $i$ during time slot $t$.  

  \item $\mathbf{C^s}(t) = ( C^s_0(t),C^s_1(t),\ldots, C^s_L(t))$ (resp. $\mathbf{C^r}(t) = ( C^r_0(t),C^r_1(t),\ldots, C^r_K(t))$) is the channel connectivity for SS (resp. RS) nodes during time slot $t$, where  $C^s_{0}(t)= C^r_{0}(t)=1, \forall t$.   

  \item $\mathbf{U}(t) =  ( U_1(t),U_2(t),U_3(t))$, s.t. $ U_1(t)  \in   \mathcal L, U_2(t), U_3(t)\in \mathcal K $, is the scheduler decision (or control), where $\mathbf{U}(t)=(i,j,k)$  means that  SS node $i$ is scheduled to transmit  to  RS node $j$ during the first half of time slot $t$, and RS node $k$ is scheduled to transmit to the BS during the second half of time slot $t$.
%
\end{itemize}

For ease of   reference, we refer to the state of queue lengths and connectivities, i.e., the tuple $(\mathbf{X}(t),\mathbf{Y}(t), \mathbf{C^s}(t),\mathbf{C^r}(t))$, as the system ``state'' (denote by $\mathbf S(t)$) at time slot $t$. 

We make the following statistical assumptions regarding the random processes in the system. The arrival processes (${A}_i(t), i=1,\ldots, L$) are assumed to be i.i.d. Bernoulli\footnote{This assumption is widely used in the literature for analytic studies and optimization of wireless networks  \cite{Tassiulas}, \cite{ganti} and \cite{Zubaidy}.},  with parameter $q$.
 However, the arrivals to any RS queue at time $t$ is equal to the number of packets   transmitted from a  SS node  to that RS node during that time slot. 
For convenience, we define  $A_0(t) = W^s_0(t)$, where $W^s_0(t)$ is the number of packets withdrawn from queue 0 during time slot $t$, in order to ensure that $X_0(t)=0$ for all $t$. Furthermore, transmitted dummy packets (i.e., fictitious packets from dummy queues) will not be added to the receiver queue (the RS queue that the  packet is routed to). This assumption is intuitively correct, since fictitious packet is generated only when  there is no real packet transmission.

The connectivity processes  $C^s_i(t)$ and $ C^r_j(t)$, for all $i=1,\ldots, L$ and $j=1,\ldots, K$ are assumed to be independent 2-state channels with connection probability $p$.
It is further assumed that the connectivity and arrival  processes are independent of each other. 

Some of the statistical assumptions that we enforce are necessary for the tractability of the solution for this problem. Others can be relaxed for the cost of more complexity. For future work, we propose to relax some of these assumptions.

We define next the `withdrawal' and the `insertion' controls as a function of the scheduler control $\mathbf U(t)$ in order to simplify  problem formulation and the proof of our results.

\subsection{Feasible Withdrawal/Insertion Control Vectors} 
Let $\mathds{1}_{\{B\}}$ denote the indicator function for condition $B$. At any given time slot $t$, we define the SS (respectively the RS) \textit{withdrawal vector} $\mathbf W^s(t)$ (respectively $\mathbf W^r(t)$) as follows:
\begin{eqnarray}\label{eq:WSS}
		W^s_i(t)& =&  \mathds 1_{ \{U_{1}(t)=i \}} , \quad \forall i \in \mathcal L, \quad and, \\
		W^r_j(t) &= & \mathds 1_{ \{U_{3}(t)=j \}} , \quad \forall j \in \mathcal K, \label{eq:WRS}
\end{eqnarray}
%
where $W^s_i(t)$ (respectively $W^r_j(t)$) represents the number of packets withdrawn from SS queue $i$ (respectively RS queue $j$) during time slot $t$.
We also define the RS \textit{insertion vector} as: 
\begin{equation}\label{eq:IRS}
		V^r_j(t) = \mathds 1_{ \{U_{2}(t)=j \}}  , \quad j=1,\ldots, K, 
\end{equation}
where $V^r_j(t)$  represents the number of packets inserted to RS queue $j$  during time slot $t$. Note that we do not allow real packets to be inserted in the dummy queue. Similarly, we do not allow dummy packets to be inserted into real queues. 

In the system described above and for any (\textit{feasible}) withdrawal/insertion controls,  the queue length for any SS node evolves according to the following relation:
\begin{equation}\label{eq:EvoX}
		\mathbf X(t+1) = \mathbf X(t) - \mathbf W^s(t) + \mathbf A(t)
\end{equation}  
Similarly, the queue length evolution for any RS node is given by the following relation:
\begin{equation}\label{eq:EvoY}
		\mathbf Y(t+1)  =	\mathbf Y(t) +\mathbf V^r(t)  -  \mathbf W^r(t)
\end{equation}  

%

It is apparent, from Equations (\ref{eq:EvoX}) and (\ref{eq:EvoY})  that Equations (\ref{eq:WSS}) -- (\ref{eq:IRS}) do not guarantee feasibility of the withdrawal/insertion vectors and hence the scheduling control vector $\mathbf U(t)$. Therefore, we provide the following feasibility condition:
 
A  vectors $\mathbf U(t)$ is said to be a `\textit{feasible scheduling control}' if  the following condition is  satisfied:  `\textit{a packet may only be withdrawn from  a connected, non-empty queue}'. Formally, 
given the system state $\mathbf S(t)$ during time slot $t$, a scheduling control vector $\mathbf U(t)$ is \textit{feasible} if and only if the resulted withdrawal/insertion vectors satisfy the following feasibility constraints:
\begin{eqnarray}
    0  \leq &  W^s_i(t) &\leq \mathds 1_{ \{X_{i}(t)>0 \}} \cdot C^s_i(t), \quad \forall  i\neq 0, \label{eq:cons1} \\
     0  \leq &  W^r_j(t) &\leq \mathds 1_{ \{Y_{j}(t)>0 \}} \cdot C^r_j(t), \quad \forall  j\neq 0, \label{eq:cons2} 
\end{eqnarray}
\begin{equation} \label{eq:cons3}  
      \sum_{i=0}^L W^s_i(t)  =   1 , \qquad   \sum_{j=0}^K W^r_j(t)  =    1  .  
\end{equation}

According to Constraint (\ref{eq:cons1}), a packet is withdrawn from a SS queue $i$ only if  queue $i$ is connected and non-empty, i.e., $X_i(t) >0$ and $C^s_i(t) = 1$.  Similarly, according to Constraint (\ref{eq:cons2}), a packet can only be withdrawn from a connected, non-empty RS queue. Constraint (\ref{eq:cons3}) insures that only one SS node and one RS node  are allowed to transmit at any given time $t$. 
Let  $\mathcal U(\mathbf S(t))$ be the set of all feasible scheduling controls when the system in state $\mathbf S(t)$.

\subsection{Policies for Dynamic Packet Scheduling} \label{Sec:sched_policy}

A \emph{packet scheduling} policy $\pi$ (or policy $\pi$ for short) is a rule that determines the feasible control vectors $\mathbf{U}(t)$ for all $t$ as a function of the past history and current state of the system, where the state history  $\mathbf{H}(t)$ is given by the following sequence of random variables
\begin{eqnarray}
	\mathbf{H}(1) & \!\!\!=\!\!\!&(\mathbf{X}(1),\mathbf{Y}(1)), \quad \text{and for} \quad t\geq 2 :\nonumber \\
    \mathbf{H}(t) &\!\!\!=\!\!\!&(\mathbf{X}(1),\mathbf{Y}(1),\mathbf{C^s}(1),\mathbf{C^r}(1),\mathbf{A}(1),\ldots, \mathbf{C^s}(t \! - \! 1), 
      \mathbf{C^r}(t\! - \! 1), \mathbf{A}(t \! - \! 1), \mathbf{C^s}(t),\mathbf{C^r}(t) )
\end{eqnarray}

Let $\mathcal{H}_t$ be the set of all state histories up to time slot $t$. Then a policy $\pi$ can be formally defined as the sequence of measurable functions
\begin{eqnarray} \label{eq:policy}
 g_t: \mathcal{H}_t \longmapsto \mathcal{Z}_+^{3},
\quad \text{s.t.} \quad  g_t(\mathbf{H}(t))\in \mathcal{U}(\mathbf{S}(t)), \quad  t=1,2,\ldots
\end{eqnarray}
where $\mathcal{Z}_+$ is the set of non-negative integers.

The set of feasible\footnote{We say that a policy $\pi$ is feasible if it selects a feasible scheduling control $\mathbf U^{\pi}(t) \in \mathcal U(\mathbf S(t))$ for all $t$.} scheduling policies described in Equation (\ref{eq:policy}) is denoted by $\Pi$. We are interested in  a subset of $\Pi$ that we will introduce in the next section, namely the class of \emph{Most Balancing} (MB) policies.   The main objective of this work is to prove the optimality of MB policies among all  policies in $\Pi$.

\section{The Class of MB Policies ($\Pi^{MB}$)} \label{sec:MBPolicies}
In this section, we provide a  description and mathematical characterization of the class of MB policies.
Intuitively, the MB policies attempt to balance the sizes (leftover) of  the SS queues as well as the RS queues in the system. This can be achieved by minimizing the queue length differences  for the two sets of queues, at every time slot $t$.
We present next a more formal characterization of MB policies. We first  define the `\textit{imbalance index}' ($\kappa(\mathbf x)$) of a vector $\mathbf x$. 

 Let $\mathbf x \in \mathcal Z_+^{M}$ be an $M$-dimensional vector. The imbalance index of  $\mathbf x$ is defined as   follows:  
\begin{eqnarray} \label{eq:kappa}
\kappa(\mathbf x) :  \mathcal Z_+^{M} \longmapsto \mathcal Z_+ , \quad
   \kappa(\mathbf x)  =  \sum_{i=1}^{M-1} \sum_{j=i+1}^{M} ( x_{[i]} - x_{[j]} ),
\end{eqnarray}
\noindent where $[k]$ denotes the index of the $k^{th}$ longest component in the vector $\mathbf x$.

The above definition ensures that the differences are nonnegative and a pair of components is accounted for in the summation only once. 
We define next the ``\textit{balancing interchange}'' for the vector $\mathbf x$. We use this operation in the proof  for the optimality of MB policies. 

\noindent\textbf{Definition: Balancing Interchange}:
Given  vectors $\mathbf x , \mathbf x^* \in \mathcal Z_+^{M}$, we say that 
$\mathbf x^*$ is obtained from $\mathbf{x}$ by performing a \textit{balancing interchange} if
the two vectors differ in two components  $i>0$ and $j\ge 0$ only, where 
\begin{equation} \label{eq:BI}
  x_i^* = x_i-1, \quad  x^*_j = x_j+1, \quad s.t. \quad x_i \geq x_j +1.
\end{equation}

To put the above definition into perspective, if the vector $\mathbf x$ represents a queue sizes vector then a balancing interchange would involve the removal of one packet from a larger queue $i$ and the insertion of that packet to a smaller queue $j$. We will show later (in Lemma \ref{lem:alg1c}) that such an interchange will decrease the imbalance index  of the vector. 

Given a state $ \mathbf{s}(t)$ and a policy $\pi$ that chooses the feasible scheduling control $\mathbf{u}(t) \in \mathcal{U}(\mathbf s(t))$ at time slot $t$; define the ``updated'' queue sizes, $\hat{x}_i(t)$ and $\hat{y}_j(t)$, as the sizes of these queues  after applying the control $\mathbf u(t)$ and just before adding the exogenous arrivals during time slot $t$. Note that because we let $z_0(t)= w^s_0(t)$,  $\hat x_0(t)$ may be negative. The updated queue sizes can be stated as follows:
\begin{eqnarray}\label{eq:hatx}
\hat{x}_i(t)&=& x_i(t) -w^s_i(t), \qquad  i \in \mathcal L, \quad and, \\
\hat{y}_j(t) &=& y_j(t)+ v^r_j(t) - w^r_j(t), \quad  j \in \mathcal K \label{eq:haty}
\end{eqnarray}

At any given time slot $t$, the imbalance indices for the updated SS and RS queue length vectors $\hat{\mathbf x}(t)$ and $\hat{\mathbf y}(t)$ are given by $\kappa(\hat{\mathbf x}(t))$ and $\kappa(\hat{\mathbf y}(t))$. The $L+1^{st}$ SS queue as well as the $K+1^{st}$ RS queue are the dummy queues defined in the previous section.
%
From Equation (\ref{eq:kappa}), it follows that the minimum possible value of the imbalance index for a $M+1$-dimensional vector $\mathbf x$ is equal to $M \cdot x_{[M]}$ 
which is indicative of a fully balanced system. 

 We  denote by $ \Pi^{MB}$  the set of all MB policies in the system. We define the elements of $ \Pi^{MB}$  as follows:

\noindent\textbf{Definition: Most Balancing Policies:}
A \textit{Most Balancing} (MB) policy is a policy $\pi \in \Pi$ that, at every $t=1,2, \ldots$, chooses feasible scheduling control vector $\mathbf u(t) \in \mathcal{U}(\mathbf s(t))$ such that both imbalance indices $\kappa(\hat{\mathbf x}(t))$ and $\kappa(\hat{\mathbf y}(t))$ are minimized, i.e.,
\begin{eqnarray} 
   \Pi^{M\!B}  =  \Big \{\pi\in \Pi:   \argmin_{\mathbf u(t) \in \mathcal{U}(\mathbf s(t))}    \kappa(\hat{\mathbf x}(t))   \bigcap  \argmin_{\mathbf u(t) \in \mathcal{U}(\mathbf s(t))}    \kappa(\hat{\mathbf y}(t)) ,  \forall t \Big  \} \hspace{-8mm}  
 \label{eq:MBpolicy} %
\end{eqnarray}

In Equation (\ref{eq:MBpolicy}), two sets of policies are defined through the two $\argmin$ functions.  Policies in the first (respectively the second) set   minimize the imbalance  index for the SS (respectively the RS)  queue length vector. 
The intersection of the two sets results in a set of policies that minimize the imbalance index for both vectors. 
We say that a policy has the ``\textit{MB property}'' during time slot $n$, if it choses a control that satisfies Equation (\ref{eq:MBpolicy}) at $t=n$. Then a MB policy can be defined as the policy that has the MB property at every time slot.

The  set $\Pi^{MB}$ in  (\ref{eq:MBpolicy}) is well-defined and non-empty, since the minimization is over a finite set of controls.
Furthermore, the set of MB policies may have more than one element. 

\subsection{MB Policy Implementation}\label{sec:MB_Impl}
In this section, we provide  a low-complexity heuristic algorithm (LCQ/SQ/LCQ) to implement  MB policies. This algorithm is defined next:

\noindent\textbf{Definition: Algorithm LCQ/SQ/LCQ:}
For every time slot $t$, Algorithm LCQ/SQ/LCQ selects the feasible vector $\mathbf u(t) $ such that $u_1(t) $ is the longest connected SS queue, $u_2(t) $ is the shortest  RS queue, and $u_3(t) $ is the longest connected RS queue. That is
\begin{eqnarray} 
	u_1(t) &=& l^s : l^s \in \argmax_{i\in \mathcal L: c^s_i(t)=1} x_i(t) \label{eq:lcqsqlcq1} \\
	u_2(t) &=& s^r : s^r \in \argmax_{i\in \mathcal I} c^r_i(t), \,\, \quad \mathcal I =\argmin_{j\in \{1,\ldots, K\} } y_j(t) \label{eq:lcqsqlcq2} \\
	u_3(t) &=& l^r : l^r \in \argmax_{j\in \mathcal K: c^r_j(t)=1} (y_j(t) + v^r_j(t) ) \label{eq:lcqsqlcq3}
\end{eqnarray}
where $\mathbf v^r(t)$ is the RS insertion vector at time slot $t$. For SS queue 0 we add one extra condition for the sake of mathematical accuracy, that is: ``\textit{If} $u_1(t) =0$  \textit{then }  $u_2(t) =0$.'' This may happen  when the controller is forced to idle during the first half of the frame.
\endremark

 Equation (\ref{eq:lcqsqlcq2}) identifies the shortest  RS queue; if there are more than one RS queue that satisfy this condition,  one of which (at least) is connected, then the connected one is the one selected as   $u_2(t)$. Otherwise,   $u_2(t)$ will be the shortest non-connected RS queue. The reason behind this extra condition is a special case where all the RS queues have the same size, then RS queue $u_2(t)$ will be the longest RS queue after adding the packet transmitted from SS queue $u_1(t)$. Selecting a connected RS queue in this case will provide the opportunity for the scheduler to select the longest SS queue as $u_3(t)$.

Cellular networks, including 4G wireless networks, are mostly infrastructure-based networks. Therefore, a centralized approach  can be used for the implementation of packet-scheduler (i.e., in BS). 
Furthermore, in modern cellular networks a pilot channel is used to estimate, among other things, the channel signal-to-noise ratio by measuring the received signal power at the receiving end. 
In this case, the channel state information (CSI) as well as the queue state information   can be made available to the controller with minimal efforts.  

\begin{lem}\label{lem:2}
Algorithm  LCQ/SQ/LCQ results in a feasible control vector $\mathbf u(t)$ for any $t$.
\end{lem}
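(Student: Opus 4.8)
The plan is to verify that the control vector $\mathbf u(t) = (u_1(t), u_2(t), u_3(t))$ produced by Algorithm LCQ/SQ/LCQ satisfies the three feasibility constraints (\ref{eq:cons1}), (\ref{eq:cons2}), (\ref{eq:cons3}) for every system state $\mathbf s(t)$. Since the algorithm is defined component-by-component, I would check each of the three selections in turn, handling the dummy-queue index $0$ carefully because that is where the subtleties lie. Throughout I would rely on the fact that the dummy queues have full connectivity ($c^s_0(t) = c^r_0(t) = 1$) at all times, so idling is always an available feasible action.

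\textbf{Step 1 (the $u_1$ selection and constraint (\ref{eq:cons1})).} The set $\{i \in \mathcal L : c^s_i(t) = 1\}$ over which the $\argmax$ in (\ref{eq:lcqsqlcq1}) is taken is never empty, since it always contains index $0$. Hence $u_1(t)$ is well-defined and $c^s_{u_1(t)}(t) = 1$. I would then argue that whenever $u_1(t) \neq 0$, we must have $x_{u_1(t)}(t) > 0$: indeed, if every connected \emph{real} SS queue were empty, the maximum of $x_i(t)$ over connected queues would be attained at $i=0$ (with value $0$), and one may assume the tie-breaking in the $\argmax$ then returns $0$ — or, more cleanly, if some real connected queue is chosen it has length at least that of queue $0$, namely at least $0$, and if it equals $0$ we may take the representative to be $0$. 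This gives $w^s_i(t) = \mathds 1_{\{u_1(t) = i\}} \le \mathds 1_{\{x_i(t) > 0\}}\cdot c^s_i(t)$ for all $i \neq 0$, which is (\ref{eq:cons1}); and $\sum_{i=0}^L w^s_i(t) = 1$ since exactly one index is selected, giving the first half of (\ref{eq:cons3}).

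\textbf{Step 2 (the $u_2$ selection — no feasibility constraint, but consistency).} There is no explicit constraint of the form (\ref{eq:cons1})--(\ref{eq:cons2}) on $\mathbf v^r(t)$; the only requirement is $\sum_j v^r_j(t) = 1$ implicitly built into the model, plus the side condition ``if $u_1(t) = 0$ then $u_2(t) = 0$'' which the algorithm enforces by fiat. I would simply note that $\mathcal I = \argmin_{j \in \{1,\ldots,K\}} y_j(t)$ is nonempty, so $u_2(t)$ is well-defined, and that the stipulated side condition is respected by construction, so a dummy (fictitious) packet is inserted precisely when a dummy packet is withdrawn from the SS side — consistent with the modelling convention that fictitious packets are not added to real RS queues.

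\textbf{Step 3 (the $u_3$ selection and constraint (\ref{eq:cons2})).} The updated RS lengths $y_j(t) + v^r_j(t)$ are the relevant quantities here. The $\argmax$ in (\ref{eq:lcqsqlcq3}) is over $\{j \in \mathcal K : c^r_j(t) = 1\}$, again nonempty since $0$ belongs to it. By the same tie-breaking argument as in Step 1, $u_3(t) \neq 0$ forces $y_{u_3(t)}(t) + v^r_{u_3(t)}(t) > 0$; and since inserting a packet can only help, I would observe that if $u_3(t)$ happens to be the queue that just received the routed packet then its updated length is automatically positive, and otherwise its (unchanged) length must already exceed that of queue $0$. Either way $y_{u_3(t)}(t) > 0$ \emph{or} the withdrawal is of a packet that is genuinely present after insertion, so $w^r_j(t) \le \mathds 1_{\{y_j(t) > 0\}}\cdot c^r_j(t)$ — here I would spell out the one genuinely delicate case, namely when all real RS queues are empty before insertion but $u_2(t)$ put a packet into one of them, in which case $u_3(t)$ should be taken to be that queue (if connected) and the withdrawal is legitimate. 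Combined with $\sum_{j=0}^K w^r_j(t) = 1$ this yields (\ref{eq:cons2}) and the second half of (\ref{eq:cons3}), completing the verification that $\mathbf u(t) \in \mathcal U(\mathbf s(t))$.

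\textbf{The main obstacle} is Step 3's interaction with Step 2: the feasibility of the $u_3$ withdrawal must be checked against the \emph{post-insertion} RS state, and the corner case where the RS stack is entirely empty of real packets until $u_2(t)$ deposits one — it is exactly to cover this that the algorithm phrases (\ref{eq:lcqsqlcq3}) in terms of $y_j(t) + v^r_j(t)$ rather than $y_j(t)$, and the proof needs to make that design choice do its job. A secondary subtlety is justifying the tie-breaking claims ("if a real queue of length $0$ is chosen we may take the representative to be $0$"); I would either argue that the dummy index is always a valid representative of the argmax whenever the maximal length is $0$, or simply invoke that the algorithm's output is feasible for \emph{any} consistent tie-breaking rule, since feasibility (\ref{eq:cons1})--(\ref{eq:cons2}) only requires $w^s_i(t) = 1 \Rightarrow x_i(t) > 0$ for $i \neq 0$, and a real queue of length $0$ never strictly maximises $x_i(t)$ unless all connected queues (including $0$) have length $0$ — in which case selecting $0$ is available and the only feasible nontrivial choice.
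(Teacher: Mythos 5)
Your proposal is correct and follows essentially the same route as the paper's proof: a direct verification that each $\argmax$ is taken over a nonempty set of connected queues (the dummy index guaranteeing nonemptiness), that the selected longest connected queue is nonempty unless the idling action is the only option, and that the sum constraints in (\ref{eq:cons3}) hold because exactly one index is chosen on each side. You are in fact more careful than the paper, which does not spell out the tie-breaking issue when all connected queues are empty, nor the corner case where the packet withdrawn by $u_3(t)$ is the one just inserted by $u_2(t)$ (where feasibility must be read against the post-insertion RS state, as the form of (\ref{eq:lcqsqlcq3}) intends).
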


\begin{proof}
According to Equations (\ref{eq:lcqsqlcq1}) -- (\ref{eq:lcqsqlcq3}),  packets are withdrawn  from connected queues only. Furthermore,  packets are withdrawn from the longest connected queue for both SS and RS stack of queues. This will insure that as long as there is at least a single connected, non-empty queue  then the LCQ will not be empty. Therefore, Equations (\ref{eq:cons1}) and  (\ref{eq:cons2}) are satisfied. Furthermore,  Equation (\ref{eq:cons3}) is satisfied by  definition of the scheduler control $\mathbf u(t)$.
 \end{proof}
 
The following theorem states that the policy resulted  from the proposed implementation algorithm is indeed a MB policy.

\begin{thm}\label{thm:ImpIsMB}
For the operation of the system presented in Section \ref{sec:model_description} and shown in Figure \ref{fig:fig_2}, a MB policy can be constructed using Algorithm  LCQ/SQ/LCQ. 
\end{thm}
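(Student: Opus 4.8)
\emph{Proof plan.}
The goal is to show that the policy induced by Algorithm LCQ/SQ/LCQ satisfies the two membership conditions in \eqref{eq:MBpolicy} at every slot, i.e.\ that at each $t$ the control $\mathbf u(t)$ it selects minimizes $\kappa(\hat{\mathbf x}(t))$ and, simultaneously, minimizes $\kappa(\hat{\mathbf y}(t))$ over $\mathcal U(\mathbf s(t))$. By Lemma~\ref{lem:2} the selected control is feasible for all $t$, so the induced policy lies in $\Pi$ and only the two minimality statements remain. The workhorse is an elementary monotonicity property of $\kappa$: for a vector $\mathbf z$ and indices $a,b$ with $z_a\ge z_b$, passing from $\mathbf z-e_b$ to $\mathbf z-e_a$ (and from $\mathbf z+e_a$ to $\mathbf z+e_b$) is a balancing interchange in the sense of \eqref{eq:BI} --- in $\mathbf z-e_b$ coordinate $a$ has value $z_a\ge z_b=(z_b-1)+1$, one more than coordinate $b$ --- so by Lemma~\ref{lem:alg1c} (equivalently, from $\kappa(\mathbf z)=\sum_k(M+1-2k)z_{[k]}$) one gets $\kappa(\mathbf z-e_a)\le\kappa(\mathbf z-e_b)$ and $\kappa(\mathbf z+e_b)\le\kappa(\mathbf z+e_a)$: decrementing a longer coordinate, or incrementing a shorter one, never increases the imbalance index. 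Taking $b$ to be a dummy coordinate (value $0$), withdrawing from any connected non-empty queue never increases $\kappa$ relative to idling.

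\emph{SS queues.}
The SS update $\hat{\mathbf x}(t)=\mathbf x(t)-e_{u_1(t)}$ depends on the control only through $u_1(t)$, and a feasible $u_1(t)$ is either $0$ or the index of a connected non-empty SS queue. Applying the monotonicity property repeatedly, the feasible choice minimizing $\kappa(\hat{\mathbf x}(t))$ is the one that decrements a connected SS queue of maximal length, i.e.\ an element of $\argmax_{i\in\mathcal L:\,c^s_i(t)=1}x_i(t)$ as in \eqref{eq:lcqsqlcq1} (the dummy queue, connected with value $0$, belongs to this argmax exactly when every connected real SS queue is empty, i.e.\ when idling is forced). Hence the algorithm's $u_1(t)$ lies in $\argmin_{\mathbf u\in\mathcal U(\mathbf s(t))}\kappa(\hat{\mathbf x}(t))$; moreover which maximizer is picked is irrelevant, and it influences the RS subsystem only through whether $u_1(t)=0$.

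\emph{RS queues.}
This is the heart of the argument. I would decompose the RS update into the insertion step $\tilde{\mathbf y}(t)=\mathbf y(t)+\mathbf v^r(t)$ followed by the withdrawal step $\hat{\mathbf y}(t)=\tilde{\mathbf y}(t)-\mathbf w^r(t)$, and argue in two layers. First, \emph{given} $\tilde{\mathbf y}(t)$, the monotonicity property shows that withdrawing from a connected queue of $\tilde{\mathbf y}(t)$ of maximal length minimizes $\kappa(\hat{\mathbf y}(t))$ over the choice of $u_3(t)$ --- exactly \eqref{eq:lcqsqlcq3}. Second, \emph{optimizing the insertion}: routing the (real) transmitted packet to a shortest RS queue, breaking ties in favour of a connected queue as in \eqref{eq:lcqsqlcq2}, makes the pair (insertion, best withdrawal) jointly optimal; the connected tie-break is precisely what is needed in the borderline case where all RS queues are equal, in which one wants $u_3(t)=u_2(t)$ so that $\hat{\mathbf y}(t)=\mathbf y(t)$ is fully balanced, which forces the insertion target to be connected. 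Finally one must reconcile the RS and SS objectives: when $u_1(t)=0$ is forced, every feasible control has $u_2(t)=0$ and the problem reduces to the first layer with $\tilde{\mathbf y}(t)=\mathbf y(t)$; when a real transmission is available (so $\kappa(\hat{\mathbf x})$-minimality requires $u_1(t)\ne0$), one checks that performing the real transmission followed by the insert/withdraw pair above attains the same minimal $\kappa(\hat{\mathbf y}(t))$ as the corresponding idle-SS controls, so the two $\argmin$ sets in \eqref{eq:MBpolicy} intersect and the algorithm's control lies in the intersection.

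\emph{Main obstacle.}
The delicate part is the second layer together with the SS/RS reconciliation: verifying joint optimality of the (insert-into-shortest, withdraw-from-longest-connected) pair against \emph{all} feasible $(u_2(t),u_3(t))$ --- including controls that idle one of the two RS operations --- while tracking how the sorted order of $\mathbf y(t)$, and hence the coefficients implicit in $\kappa$, shift under a unit insertion and a unit withdrawal, and handling the connectivity-induced tie cases. I would organize this as a short case analysis on whether a queue of maximal length among all RS queues is connected, invoking the monotonicity property in each case; and, should the naive bound fail in a degenerate configuration, I would first establish that along the algorithm's own trajectory the RS queue vector stays sufficiently balanced (shortest-queue routing keeps the spread controlled) so that the remaining cases are exhaustive.
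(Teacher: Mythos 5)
Your plan matches the paper's proof in all essentials: both rest on Lemma~\ref{lem:alg1c} (a balancing interchange strictly decreases $\kappa$ unless it degenerates to a two-component swap), split the analysis over the three control components $u_1,u_2,u_3$, and isolate the same delicate point --- the connectivity tie-break for $u_2$ when several RS queues are shortest and its interaction with $u_3$ --- the only difference being that the paper phrases the argument as a contradiction (any control deviating from (\ref{eq:lcqsqlcq1})--(\ref{eq:lcqsqlcq3}) can be strictly improved by a balancing interchange) rather than as direct membership in the two $\argmin$ sets of (\ref{eq:MBpolicy}). One caution: your proposed fallback of invoking balance along the algorithm's own trajectory would not cover the arbitrary initial state and is in any case unnecessary, since the per-state case analysis closes on its own exactly as in the paper.
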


To prove Theorem \ref {thm:ImpIsMB}, we  need   Lemma \ref{lem:alg1c} below. It quantifies the effect of performing a balancing interchange on the imbalance index $\kappa(\mathbf x)$ of the  $L+1$-dimensional vector $\mathbf x$. The proof of the lemma is given in the appendix.

\begin{lem}\label{lem:alg1c}
Let  $\mathbf x$ and $\mathbf x^*$  be two $L+1$-dimensional ordered vectors  (in descending order); suppose that  $\mathbf x^*$  is obtained from   $\mathbf x$ by performing a balancing interchange  of two components, $l$ and $s$, of $\mathbf x$, where $x_l > x_s$,
such that, $ s>l; x_l> x_a, \forall a>l$ and $x_s < x_b, \forall b<s$.
Then 
\begin{eqnarray}\label{eq:lem1}
\kappa(\mathbf x^*) = \kappa(\mathbf x)  -2(s-l) \cdot \mathds{1}_{\{ x_{l} \geq x_{s} +2  \}}
\end{eqnarray}
\end{lem}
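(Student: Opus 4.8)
The plan is to prove Lemma~\ref{lem:alg1c} by direct computation from the definition of the imbalance index in Equation~(\ref{eq:kappa}), carefully tracking which pairwise differences $x_{[i]}-x_{[j]}$ change when we perform the balancing interchange on components $l$ and $s$. Since $\mathbf x$ is already ordered in descending order and the hypotheses $x_l > x_a$ for all $a>l$ and $x_s < x_b$ for all $b<s$ guarantee that decreasing $x_l$ by $1$ and increasing $x_s$ by $1$ does not disturb the ordering (the new vector $\mathbf x^*$ is still sorted in descending order, possibly with the $l$ and $s$ entries tied with neighbours), the summation in (\ref{eq:kappa}) can be compared term by term with the indices held fixed. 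First I would split the double sum $\sum_{i<j}(x_i - x_j)$ into four groups according to whether each index is $l$, is $s$, or is neither: (i) pairs not involving $l$ or $s$ are unchanged; (ii) the single pair $(l,s)$ itself; (iii) pairs $(l,k)$ or $(k,l)$ with $k \notin\{l,s\}$; (iv) pairs $(s,k)$ or $(k,s)$ with $k\notin\{l,s\}$.

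Next I would evaluate each group's contribution to $\kappa(\mathbf x^*)-\kappa(\mathbf x)$. Group (i) contributes $0$. Group (ii): the term $x_l - x_s$ becomes $x_l^* - x_s^* = (x_l-1)-(x_s+1) = x_l - x_s - 2$, contributing $-2$, but only when this pair is still counted with the same sign, i.e.\ when $x_l^* \ge x_s^*$, equivalently $x_l \ge x_s + 2$; if instead $x_l = x_s+1$ then after the interchange $x_l^* = x_s^*$ and the difference is $0$ either way, so the net change from this pair is $0$ — this is exactly the source of the indicator $\mathds 1_{\{x_l \ge x_s+2\}}$. Groups (iii) and (iv): for any third index $k$, the pair involving $l$ changes by $-1$ (since $x_l$ dropped by one, and $x_l$ is at least as large as everything to its right and at most everything to its left, so the sign of the difference with any $x_k$ is preserved) and the pair involving $s$ changes by $+1$ for the same reason; these cancel for each fixed $k \notin\{l,s\}$, contributing $0$ in total. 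Summing the four groups gives $\kappa(\mathbf x^*)-\kappa(\mathbf x) = -2\cdot\mathds 1_{\{x_l \ge x_s+2\}}$ when... wait, that is missing the factor $(s-l)$, so the accounting in groups (iii)--(iv) must be redone.

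The correction, and the main obstacle, is that the cancellation in groups (iii) and (iv) is \emph{not} complete: for an index $k$ with $l < k < s$ the component $x_k$ satisfies $x_s \le x_k \le x_l$ wait no — the hypotheses say $x_l > x_a$ for $a > l$ and $x_s < x_b$ for $b < s$, so for $l < k < s$ we have $x_k < x_l$ and $x_k > x_s$, hence $x_k$ sits strictly between. For such a $k$, the pair $(l,k)$ has difference $x_l - x_k > 0$ which decreases by $1$, and the pair $(k,s)$ has difference $x_k - x_s > 0$ which also decreases by $1$ (because $x_s$ increased); so instead of cancelling, these two contribute $-2$ each. For $k$ outside the interval $[l,s]$ — i.e.\ $k < l$ or $k > s$ — the $l$-pair and $s$-pair do cancel as described above. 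There are exactly $s - l - 1$ indices strictly between $l$ and $s$, each contributing $-2$, giving $-2(s-l-1)$; adding the group (ii) contribution of $-2\cdot\mathds 1_{\{x_l\ge x_s+2\}}$ and being careful that when $x_l \ge x_s + 2$ the intermediate terms are genuinely positive both before and after (so the $-2$ per intermediate index is real), one obtains $\kappa(\mathbf x^*) - \kappa(\mathbf x) = -2(s-l)\cdot\mathds 1_{\{x_l \ge x_s+2\}}$. The delicate point I would write out carefully is the boundary case $x_l = x_s + 1$: then $x_k = x_l = x_s+1$ forces all intermediate $x_k$ equal, the interchange merely swaps two equal-valued slots, $\kappa$ is unchanged, and the formula correctly returns $0$; conversely when $x_l \ge x_s+2$ I must confirm that every one of the $s-l$ "affected" pairs has a strictly positive difference both in $\mathbf x$ and in $\mathbf x^*$ so that none of them flips sign, which follows from the extremality hypotheses on $l$ and $s$. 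I expect the bookkeeping of exactly which pairs change and the verification that no sign flips occur (other than possibly the $(l,s)$ pair) to be the crux; everything else is routine arithmetic.
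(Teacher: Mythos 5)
Your proposal is correct and follows essentially the same route as the paper's proof: a direct term-by-term comparison of the double sum in (\ref{eq:kappa}), classifying pairs according to whether they involve $l$, $s$, both, or neither (the paper's five mutually exclusive cases are exactly your groups (ii)--(iv) before you exploit the cancellation over the third index $k$), and disposing of the boundary case $x_l = x_s+1$ by observing that the interchange then amounts to a permutation, so $\kappa$ is unchanged. One small slip worth correcting in the write-up: when $x_l = x_s+1$ the new values are $x^*_l = x_s$ and $x^*_s = x_s+1$, i.e.\ swapped rather than equal (and the hypotheses force $s=l+1$ in this case, so there are no intermediate indices), but the pair's contribution to $\kappa$ is $1$ both before and after, so your conclusion that the net change is $0$ stands.
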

%


\subsubsection{Proof for Theorem \ref{thm:ImpIsMB}}

We prove  Theorem \ref{thm:ImpIsMB} by contradiction. We assume that a  MB policy selects a control $\mathbf u(t)$  at $t$ that does not satisfy Equations (\ref{eq:lcqsqlcq1}) -- (\ref{eq:lcqsqlcq3}). The control vector selected by Algorithm LCQ/SC/LCQ is  feasible according to Lemma \ref{lem:2}. Then   using Lemma \ref{lem:alg1c} we show that applying the controls selected by  Equations (\ref{eq:lcqsqlcq1}) -- (\ref{eq:lcqsqlcq3}) will result in imbalance indices $\kappa(\mathbf{\hat x}(t))$ and $\kappa(\mathbf{\hat y}(t ))$ that are smaller than those under the MB policy which contradicts  Equation (\ref{eq:MBpolicy}). Therefore,  $\mathbf u(t)$ must satisfy Equations (\ref{eq:lcqsqlcq1}) -- (\ref{eq:lcqsqlcq3}) and the theorem follows.

\begin{proof}[Proof for Theorem \ref{thm:ImpIsMB}]
Given the system state $\mathbf s(t)= (\mathbf x(t),\mathbf y(t), \mathbf c^s(t), \mathbf c^r(t) )$ at time slot $t$;  let $l^s$ be the index of the longest connected SS queue (as in Equation (\ref{eq:lcqsqlcq1})) and  $s^r$ be the index of the shortest  RS queue \textit{before} executing the control $\mathbf u(t)$ that satisfies  Equation (\ref{eq:lcqsqlcq2}); let  $l^r$  be the index of the longest connected RS queue \textit{after} executing the controls $u_1(t)$ and $u_2(t)$ and just before executing the control $u_3(t)$ (as in Equation (\ref{eq:lcqsqlcq3})).
Let $\pi \in \Pi^{MB}$ be a MB policy that selects the scheduler control $\mathbf u(t)\in \mathcal U(\mathbf s(t))$ during time slot $t$.  To show a contradiction, we assume (to the contrary of Theorem \ref{thm:ImpIsMB}) that $\mathbf u(t)$ does not satisfy Equations (\ref{eq:lcqsqlcq1}) -- (\ref{eq:lcqsqlcq3}).

We  show next that in this case, the control vector selected by Algorithm LCQ/SQ/LCQ during time slot $t$ will result in an imbalance index that is either (i) \textit{less than} or (ii) \textit{equal to} that obtained under a MB policy. Case (i)   contradicts  Equation (\ref{eq:MBpolicy}); therefore, the MB policy must satisfy Equations (\ref{eq:lcqsqlcq1}) -- (\ref{eq:lcqsqlcq3}). Case (ii) insures that  LCQ/SQ/LCQ satisfies Equation (\ref{eq:MBpolicy}). In either case, Theorem \ref{thm:ImpIsMB} will  follow.

Consider the following three cases corresponding to Equations (\ref{eq:lcqsqlcq1}), (\ref{eq:lcqsqlcq2}) and (\ref{eq:lcqsqlcq3}):

1) $x_{u_1(t)}(t) < x_{l^s}(t)$, i.e., $u_1(t)$ does not  satisfy Equation (\ref{eq:lcqsqlcq1}) during time slot $t$. Then $\hat x_{u_1(t)}(t) < \hat x_{l^s}(t)-1$ (under $\pi$). According to Equation (\ref{eq:BI}) we can perform a balancing interchange between components  $u_1(t)$ and $l^s$ that will reduce the imbalance index $\kappa(\mathbf {\hat x}(t))$. Therefore, $\pi$ does not satisfy Equation (\ref{eq:MBpolicy}) and hence it is not a MB policy. This contradicts the original assumption that  $\pi \in \Pi^{MB}$. Therefore, we conclude that a MB policy must satisfy  Equation (\ref{eq:lcqsqlcq1}). Note that $x_{u_1(t)}(t) > x_{l^s}(t)$ is not possible since queue $u_1(t)$ must be connected (feasibility constraint (\ref{eq:cons1})) and queue $l^s$ is the longest connected queue by assumption.

2)  $y_{u_2(t)}(t) > y_{s^r}(t)$, i.e., $u_2(t)$ does not  satisfy Equation (\ref{eq:lcqsqlcq2}) during time slot $t$. Then $ y_{u_2(t)}(t) + v^r_{u_2(t)}(t) >  y_{s^r}(t) + v^r_{s^r}(t) +1$. Similar to the previous  case,  we can perform a balancing interchange between queues  $u_2(t)$ and $s^r$. Again this will reduce the imbalance index $\kappa(\mathbf y(t)+\mathbf v^r(t))$. Therefore, $\pi$ does not satisfy Equation (\ref{eq:MBpolicy}) and hence it is not a MB policy. This contradiction leads us to conclude that  $\pi$  must satisfy  Equation (\ref{eq:lcqsqlcq2}). Since $s^r$ is the shortest queue by assumption, then $y_{u_2(t)}(t) < y_{s^r}(t)$ is not possible.
However, if $y_{u_2(t)}(t) = y_{s^r}(t)$ s.t. $u_2(t) \neq s^r$; in this case, if $c^r_{u_2(t)}(t) = c^r_{s^r}(t)$ then $u_2(t)$  satisfies Equation (\ref{eq:lcqsqlcq2}) during time slot $t$. Otherwise, i.e., $c^r_{s^r}(t) > c^r_{u_2(t)}(t) = 0$, then $ y_{u_2(t)}(t) + v^r_{u_2(t)}(t) =  y_{s^r}(t) + v^r_{s^r}(t) +1$. In this case, if $u_3(t)=s^r$ then $ \hat y_{u_2(t)}(t)  =  \hat y_{s^r}(t) +2$. A balancing interchange between queues  $u_2(t)$ and $s^r$ will reduce the imbalance index $\kappa(\mathbf {\hat y}(t))$. Therefore, $\pi$ does not satisfy Equation (\ref{eq:MBpolicy}) and hence it is not a MB policy. By contradiction  $\pi$  must satisfy  Equation (\ref{eq:lcqsqlcq2}). 

If on the other hand $u_3(t)\neq s^r$ then a policy that choses either  $u_2(t)$ or $s^r$ while keeping $u_1(t)$ and $u_3(t)$ the same will result in the same imbalance index. Since $\pi \in \Pi^{MB}$ by assumption, then LCQ/SQ/LCQ $\in  \Pi^{MB}$ as well.

3)  $y_{u_3(t)}(t)+ v^r_{u_3(t)}(t) < y_{l^r}(t)+v^r_{l^r}(t)$, i.e., $u_3(t)$ does not  satisfy Equation (\ref{eq:lcqsqlcq3}) during time slot $t$. Then $ \hat y_{u_3(t)}(t) <  \hat y_{l^r}(t) -1$. Again  we can perform a balancing interchange between queues  $u_3(t)$ and $l^r$ that will result in a  reduction of the imbalance index $\kappa(\mathbf {\hat y}(t)$. Therefore, $\pi$ does not satisfy Equation (\ref{eq:MBpolicy}) and hence it is not a MB policy. This contradiction leads us to conclude that a MB policy $\pi$  must satisfy  Equation (\ref{eq:lcqsqlcq3}). Since $l^r$ is the longest connected queue by assumption and given the feasibility constraint (\ref{eq:cons2}), the case where $y_{u_3(t)}(t)+ v^r_{u_3(t)}(t) > y_{l^r}(t)+v^r_{l^r}(t)$ is not possible.

The above cases are the only possible cases. We conclude that  a MB policy $\pi \in \Pi^{MB}$  satisfies Equations (\ref{eq:lcqsqlcq1}) -- (\ref{eq:lcqsqlcq3}) and Theorem \ref{thm:ImpIsMB} follow.
\end{proof}

\section{Optimality of MB Policies}\label{sec:main-result}

In this section, we provide a proof for the optimality of the Most Balancing (MB) policies. 
We start by defining a partial order to facilitate the comparison of the cost functions under different policies. We also define the class of cost functions for the optimality problem that we investigate in this section.

\subsection{Definition of the Partial Order} \label{PreferredOrder-section}

In order to prove the optimality of MB policies, we  devise a methodology that enables comparison of the queue lengths under different policies. The idea is to  define an order that we call the ``preferred order'' and use it to compare  queue length vectors, for the SS queues as well as the RS queues, under different policies. We start by defining the relation $\sqsubseteq$ on $\mathcal{Z}_+^{M}$ for some $M>0$ as follows;
we say that the two vectors $\mathbf{\tilde{x}}$ and $\mathbf{x}$ are related via  $\mathbf{\tilde{x}} \,\sqsubseteq  \, \mathbf{x}$ if:
\begin{enumerate}
\item[S1-]    $\tilde{x}_i \leq x_i$    for all $i$ (i.e., point wise comparison), 
\item[S2-] $\mathbf{\tilde{x}}$ is a 2-component permutation of $\mathbf{x}$; the two vectors differ only in two components $i$ and $j$, such that $\tilde{x}_i = x_j$ and $\tilde{x}_j = x_i$, or
\item[S3-]	$\mathbf{\tilde{x}}$ is obtained from $\mathbf{x}$ by performing a \textsl{``balancing interchange"} as  in Equation (\ref{eq:BI}).
\end{enumerate}


\noindent \textbf{Definition: } The \textit{preferred order} ($\preccurlyeq$) is defined as the transitive closure of the relation $\sqsubseteq$ on  the set $\mathcal{Z}_+^{M}, M>0$.~\endremark

The transitive closure of $\sqsubseteq$ on the set $\mathcal{Z}_+^{M}$ is the smallest transitive relation on $\mathcal{Z}_+^{M}$ that contains the relation $\sqsubseteq$ \cite{Lidl}. Intuitively,
$\mathbf{\tilde{x}} \preccurlyeq \mathbf{x}$  if  the vector $\mathbf{\tilde{x}} $ is obtained from $\mathbf{x}$ by performing a sequence  \textit{reductions}, \textit{permutations of two components} and/or \textit{balancing interchanges}.

%

\subsection{Definition of the Class of Cost Functions  $\mathcal{F}$}\label{cost-functions}
We denote by $\mathcal{F}$ the class of real-valued functions on the set $\mathcal{Z}_+^{M}$ that are monotone and non-decreasing with respect to the partial order $\preccurlyeq$. Given any two vectors $\mathbf{\tilde{x}}, \mathbf{x} \in \mathcal{Z}_+^{M}$, a function $f\in \mathcal{F}$ if and only if
\begin{equation} \label{func_class}
    \mathbf{\tilde{x}} \preccurlyeq \mathbf{x} \, \Rightarrow f(\mathbf{\tilde{x}}) \leq f(\mathbf{x}).
\end{equation}

Using  (\ref{func_class}) and the definition of preferred order, we conclude that the function $f(\mathbf{x})=x_1+x_2+\cdots +x_M$ belongs to $\mathcal{F}$. If $\mathbf x$ is a queue length vector, then this function corresponds to the total number of queued packets in the system.





\subsection{The Optimality Results}\label{sec:optimalityResults}

Let $B \leq_{st} C$ defines the usual stochastic ordering for two real-valued random variables $B$ and $C$ \cite{Stoyan}. For the rest of this article, we say that a policy $\sigma \in \Pi$ `\textit{dominates}' another policy $\pi $ if
\begin{eqnarray} \label{policy_dominance}
 f(\mathbf{X}^{\sigma}(t))   \leq_{st}    f(\mathbf{X}^{\pi}(t)) , \quad  \text{ AND }  \quad
 f(\mathbf{Y}^{\sigma}(t))   \leq_{st}    f(\mathbf{Y}^{\pi}(t)) , \quad \forall~t=1,2,\dots,
\end{eqnarray}

\noindent for all cost functions $f \in \mathcal{F}$; where $\mathbf{X}^{\sigma}$, respectively $\mathbf{Y}^{\sigma}$, is the SS (respectively RS) queue length vector under policy $\sigma $.

Note that from Equation (\ref{func_class}) and the definition of stochastic ordering,
 $\mathbf{X}^{\sigma}(t) \preccurlyeq \mathbf{X}^{\pi}(t)$ and $\mathbf{Y}^{\sigma}(t) \preccurlyeq \mathbf{Y}^{\pi}(t)$, for all $t$ and all sample paths in a suitable sample space,  is sufficient  for policy domination. The intended sample space is the standard one used in  stochastic coupling \cite{Lindvall}.

In what follows, let $\mathbf{X}^{MB}$ and $\mathbf{X}^{\pi}$ (respectively $\mathbf{Y}^{MB}$ and $\mathbf{Y}^{\pi}$) represent the SS queue sizes (respectively RS queue sizes) under $\pi^{MB} \in \Pi^{MB}$ and an arbitrary policy $\pi \in \Pi$.
To prove the optimality of MB policies (i.e., Theorem \ref{thm:MBoptimality}), we need the following definitions and results.
Define the following subsets of the set $\Pi$ of all feasible scheduling policies:
(a)  $\Pi_{\tau} \in \Pi$, the set of policies that has the MB property during slots $t\leq \tau$, and are arbitrary for $t> \tau$.
(b)  $\Pi_{\tau}^{u_2} \in \Pi$, the set of policies that has the MB property during  time slots $t \leq \tau -1$ and during $t=\tau$ choses the same controls $u_1(\tau)$ and $u_2(\tau)$  as those selected by a MB policy and an arbitrary $u_3(\tau)$. Note that   $u_3(\tau)$ may not be a MB control.

From the above definitions we have $\Pi=\Pi_0$. Note that the set $\Pi_{n}$ for any $t=n$ is not empty, since MB policies are elements of it. For $n=0,1,\ldots$, $\Pi_n$ form a monotone sequence of subsets, such that $\Pi_n \subseteq \Pi_n^{u_2} \subseteq \Pi_{n-1}$.
In light of the above, the set $\Pi^{MB}$ can be defined as 
$\Pi^{MB}= \bigcap_{n=1}^{\infty} \Pi_n.$ 
We will need the following lemmas to complete the proof of Theorem \ref{thm:MBoptimality}.

\begin{lem}\label{lem:6}
Given  $\pi \in \Pi_{\tau -1}$, a policy $\tilde{\pi} \in \Pi_{\tau}^{u_2}$ can be constructed, such that $\tilde{\pi} $ dominates $\pi$.
\end{lem}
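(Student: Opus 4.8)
\textbf{Proof plan for Lemma \ref{lem:6}.}
The plan is to use a forward induction / coupling construction. Starting from an arbitrary policy $\pi \in \Pi_{\tau-1}$, I will build $\tilde\pi$ on the same probability space (same arrival and connectivity sample paths) so that $\tilde\pi$ agrees with $\pi$ on all slots $t \le \tau-1$ (which is legitimate because both sets require the MB property only up to $\tau-1$, and $\pi$ already has it), and then at slot $t=\tau$ I force $\tilde\pi$ to use the MB choices $u_1(\tau)$ and $u_2(\tau)$ while leaving $u_3(\tau)$ arbitrary; for $t > \tau$, $\tilde\pi$ will be defined to mimic $\pi$'s decisions in a way that preserves the ordering. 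Since the two policies coincide through slot $\tau-1$, the system states $\mathbf S^{\tilde\pi}(t) = \mathbf S^\pi(t)$ for $t \le \tau$, so the only issue is to show that the deviation introduced at $t=\tau$ produces SS and RS queue-length vectors that are $\preccurlyeq$-smaller (or equal) under $\tilde\pi$, and that this relation propagates for all $t > \tau$.

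The key steps, in order, are: (1) Verify that the MB choices $u_1(\tau), u_2(\tau)$ are feasible for $\tilde\pi$ — immediate since the state is identical and feasibility depends only on the state. (2) Compare the updated SS vectors: $\pi$ withdraws from some SS queue $u_1^\pi(\tau)$, $\tilde\pi$ withdraws from the longest connected SS queue $l^s$. If $u_1^\pi(\tau) = l^s$ there is nothing to do; otherwise, using the argument structure already rehearsed in the proof of Theorem \ref{thm:ImpIsMB} (case 1) together with Lemma \ref{lem:alg1c}, $\hat{\mathbf x}^{\tilde\pi}(\tau)$ is obtained from $\hat{\mathbf x}^\pi(\tau)$ by a balancing interchange (or a two-component permutation when the two queues have equal length), hence $\hat{\mathbf x}^{\tilde\pi}(\tau) \preccurlyeq \hat{\mathbf x}^\pi(\tau)$ via relation S2 or S3. (3) Do the analogous comparison for the RS insertion: since both policies will pick the same $u_3(\tau)$? — no, $u_3$ is free — so the cleanest route is to match $\tilde\pi$'s $u_3(\tau)$ to $\pi$'s $u_3(\tau)$ whenever that is feasible for $\tilde\pi$, and argue that the change in $u_2$ (shortest RS queue vs. $\pi$'s choice) again amounts to a balancing interchange / permutation on the RS vector after insertion, so $\hat{\mathbf y}^{\tilde\pi}(\tau) \preccurlyeq \hat{\mathbf y}^\pi(\tau)$. (4) Add the (common) exogenous arrivals and observe that $\preccurlyeq$ is preserved under adding the same arrival vector (arrivals only go to SS queues, and shifting both vectors by the same nonnegative vector preserves S1/S2/S3 componentwise — this needs a short verification). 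This gives $\mathbf X^{\tilde\pi}(\tau+1) \preccurlyeq \mathbf X^\pi(\tau+1)$ and $\mathbf Y^{\tilde\pi}(\tau+1) \preccurlyeq \mathbf Y^\pi(\tau+1)$.

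(5) Propagation for $t > \tau$: this is the step I expect to be the main obstacle, because once the two vectors differ by a balancing interchange one must show that an appropriately coupled policy $\tilde\pi$ can maintain $\preccurlyeq$ forever while $\pi$ runs arbitrarily. The standard device is to show that the set of vector pairs related by $\preccurlyeq$ is closed under the one-step dynamics when $\tilde\pi$ is allowed to choose its control as a function of $\pi$'s control: for each relation type (S1 pointwise domination, S2 permutation, S3 balancing interchange) one exhibits a feasible response control for $\tilde\pi$ that keeps the pair within the transitive closure after withdrawal, insertion, and arrival. Here the tandem coupling is delicate because $U_2$ depends on $U_1$ and $U_3$ depends on $U_1, U_2$; I would handle the SS stack and the RS stack sequentially within the slot, mirroring the three-phase structure $U_1 \to U_2 \to U_3$, and in each phase invoke a ``preservation under one interchange'' sublemma (the kind of case analysis done in Theorem \ref{thm:ImpIsMB}, extended to the coupled setting). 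If the paper has such a propagation/coupling lemma available later, I would cite it; otherwise this closure-under-dynamics claim is the technical heart and would itself need a careful, case-by-case argument over the three relations S1–S3 and the three control phases.
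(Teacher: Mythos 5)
Your plan follows essentially the same route as the paper's proof: replicate $\pi$ on $t\le\tau-1$, substitute the MB choices of $u_1(\tau)$ and $u_2(\tau)$ at $t=\tau$ so that the updated SS and RS vectors differ from $\pi$'s by a balancing interchange or a two-component permutation, and then propagate $\preccurlyeq$ forward by a case analysis over the relations S1--S3 for each of $\mathbf x$ and $\mathbf y$ and each control phase $u_1,u_2,u_3$ -- this is exactly the two-part structure and the $3\times 3$ induction of the paper's Appendix B.

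One concrete correction: you commit at the outset to building $\tilde\pi$ on the \emph{same} arrival and connectivity sample path, but the paper's induction step does not use the identity coupling. In the S2 (permutation) case, where $\tilde{\mathbf x}(n)$ or $\tilde{\mathbf y}(n)$ is a two-component swap of the corresponding vector under $\pi$, the constructed path $\tilde\omega$ swaps the connectivity (and, for SS queues, the arrival) processes of the two permuted indices; this is legitimate because the queues are statistically symmetric, and it is necessary -- with literally identical connectivities the queue holding the ``long'' content under $\tilde\pi$ may be disconnected while its counterpart under $\pi$ is connected, and no feasible response control preserves the permutation. Your step (5) gestures at ``an appropriately coupled policy,'' so the idea is recoverable, but as stated the identity-coupling premise would make the closure-under-dynamics argument fail; the fix is to allow $\tilde\omega$ to be a per-slot relabeling of $\omega$ that is equal in distribution, which is what makes this a coupling argument rather than a pathwise one.
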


\begin{lem}\label{lem:7}
Given $\pi \in \Pi_{\tau }^{u_2}$; a policy $\tilde{\pi} \in \Pi_{\tau}$ can be constructed, such that $\tilde{\pi} $ dominates $\pi$.
\end{lem}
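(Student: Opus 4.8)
# Proof Proposal for Lemma \ref{lem:7}

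The plan is to use a stochastic coupling argument: starting from the policy $\pi \in \Pi_{\tau}^{u_2}$ (which agrees with a MB policy on $u_1(\tau)$ and $u_2(\tau)$ but may use a non-MB control $u_3(\tau)$), I would construct a policy $\tilde\pi \in \Pi_\tau$ that agrees with $\pi$ on all slots $t < \tau$, uses the same $u_1(\tau)$ and $u_2(\tau)$ as $\pi$, but at $t = \tau$ chooses $u_3(\tau)$ to be a most-balancing RS withdrawal (i.e. the longest connected RS queue, consistent with Equation (\ref{eq:lcqsqlcq3}) applied to $\mathbf y(\tau)+\mathbf v^r(\tau)$). Since $u_1$ and $u_2$ coincide, the SS queue vectors are sample-path identical under the two policies at $t=\tau$, and the RS vectors $\mathbf y(\tau)+\mathbf v^r(\tau)$ entering the $u_3$-decision are also identical; the only difference is which RS queue is decremented. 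For $t > \tau$, $\tilde\pi$ will be defined by a coupling that mimics $\pi$'s actions, so that the domination relations in Equation (\ref{policy_dominance}) are preserved.

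The key steps, in order, would be: (1) Set up the coupled probability space in which the arrival processes $\mathbf A(t)$ and the connectivity processes $\mathbf C^s(t), \mathbf C^r(t)$ are shared by both systems for all $t$. (2) Observe that at $t=\tau$, after applying the common $u_1(\tau), u_2(\tau)$, we have $\mathbf X^{\tilde\pi}(\tau+1) = \mathbf X^{\pi}(\tau+1)$ exactly, so the SS half of the domination is trivial up to and beyond $\tau$ as long as $\tilde\pi$ copies $\pi$'s SS controls. (3) For the RS queues: let $\pi$ withdraw from queue $j_\pi = u_3^\pi(\tau)$ and let $\tilde\pi$ withdraw from $j^* = l^r$, the longest connected RS queue of $\mathbf y(\tau)+\mathbf v^r(\tau)$. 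If $j_\pi = j^*$ there is nothing to do; otherwise $y_{j^*} + v^r_{j^*} \ge y_{j_\pi} + v^r_{j_\pi} + 1$ (since $j^*$ is the longest connected and $j_\pi$ must be connected by feasibility (\ref{eq:cons2})), so $\hat{\mathbf y}^{\tilde\pi}(\tau)$ is obtained from $\hat{\mathbf y}^{\pi}(\tau)$ either by a balancing interchange (if the two queue values differ by at least one after the respective decrements, putting the removed unit conceptually on the shorter queue) or by a 2-component permutation — in either case $\hat{\mathbf y}^{\tilde\pi}(\tau) \preccurlyeq \hat{\mathbf y}^{\pi}(\tau)$ by clauses S2/S3 of the preferred order. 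Adding the common arrivals (which for RS are zero at this stage — RS arrivals are the $\mathbf v^r$ already accounted for) preserves $\preccurlyeq$. (4) Propagate forward: from $t = \tau+1$ onward, use the standard coupling that, given $\mathbf Y^{\tilde\pi}(t) \preccurlyeq \mathbf Y^{\pi}(t)$, constructs $\tilde\pi$'s control to maintain the order slot by slot — this is exactly the inductive "order-preservation" step that the earlier coupling lemmas in this paper rely on, so I would invoke it. (5) Conclude that $f(\mathbf Y^{\tilde\pi}(t)) \le f(\mathbf Y^{\pi}(t))$ and $f(\mathbf X^{\tilde\pi}(t)) = f(\mathbf X^{\pi}(t))$ for all $t$ and all $f \in \mathcal F$, which is domination; and since $\tilde\pi$ now has the MB property at $\tau$ and inherits it for $t < \tau$ from $\pi \in \Pi_\tau^{u_2} \subseteq \Pi_{\tau-1}$, we get $\tilde\pi \in \Pi_\tau$.

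The main obstacle I anticipate is step (4): showing that once the RS vectors are ordered via $\preccurlyeq$ at time $\tau$, one can keep them ordered for all later times while $\tilde\pi$ is also forced to be a legitimate feasible policy (respecting connectivity and the single-withdrawal constraint) and while the SS side continues to be matched. The subtlety is that $\preccurlyeq$ is the transitive closure of $\sqsubseteq$, so the ordered pair at $\tau$ may be related through a chain of intermediate vectors rather than a single reduction/permutation/interchange; the forward coupling must handle each link type, and in particular a balancing interchange at $\tau$ can interact with subsequent connectivity patterns in a way that a naive "copy the action" strategy fails to preserve. Handling the case where $\tilde\pi$'s target queue is empty or disconnected when $\pi$'s is not (or vice versa) requires the careful case analysis that is typical of these arguments, and I would expect to lean on a general coupling/order-preservation lemma — analogous to the ones used for the single-stack model in \cite{Zubaidy} — rather than redo it from scratch here. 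The presence of the two-stack structure means I must also verify that decoupling the RS perturbation from the (identical) SS evolution is legitimate, i.e. that no future SS control under $\pi$ secretly depends on the RS state in a way that would break the matching; since $u_1$ is chosen from SS queues and connectivities only, and $\tilde\pi$ can always replicate $\pi$'s $u_1$, this should go through, but it needs to be stated explicitly.
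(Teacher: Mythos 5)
Your proposal is correct and follows essentially the same route as the paper: keep $\tilde\pi$ identical to $\pi$ through $u_1(\tau),u_2(\tau)$ so the SS vectors coincide, replace $u_3(\tau)$ by the longest-connected-RS-queue choice so that $\hat{\mathbf y}^{\tilde\pi}(\tau)$ is obtained from $\hat{\mathbf y}^{\pi}(\tau)$ by a balancing interchange (or coincides with it), and then propagate the preferred order forward by the same inductive case analysis used in Lemma \ref{lem:6}. The forward-propagation step you flag as the main obstacle is exactly what the paper also handles by reusing Part 2 of the proof of Lemma \ref{lem:6} verbatim, and the single-link (S1)/(S2)/(S3) relation is maintained at every slot so the transitive-closure concern does not materialize.
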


We present  a proof for Lemmas \ref{lem:6} and  \ref{lem:7} in  the appendix. 

The above lemmas provide a methodology to construct a MB policy from any arbitrary policy $\pi$ using  stepwise  improvements (i.e., by constructing policies that has the MB property for one extra time slot at every subsequent step) on the original policy while maintaining policy domination. The intermediate construction step  (Lemma  \ref{lem:6}) is necessary in order to simplify the coupling arguments used in the proof of the lemma.


\begin{thm}\label{thm:MBoptimality}
Consider the system presented in Figure \ref{fig:fig_2} and described in Section \ref{sec:model_description}. A Most Balancing  policy $\pi^{MB} \in \Pi^{MB}$ dominates any arbitrary policy for this system operation, i.e.,
\begin{eqnarray} \label{dominance_theorem}
  f(\mathbf{X}^{MB}(t)) & \leq_{st} &  f(\mathbf{X}^{\pi}(t)) , \quad  \text{ AND } \\
 f(\mathbf{Y}^{MB}(t)) & \leq_{st}  & f(\mathbf{Y}^{\pi}(t)) , \quad \forall~t=1,2,\dots,
\end{eqnarray}
\noindent for all $\pi\in \Pi$ and all cost functions $f \in \mathcal{F}$.
\end{thm}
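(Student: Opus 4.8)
The plan is to reduce the theorem to the two construction lemmas (Lemma~\ref{lem:6} and Lemma~\ref{lem:7}) by an induction on the number of initial time slots during which the comparison policy already has the MB property, and then to pass to the limit. Fix an arbitrary $\pi \in \Pi = \Pi_0$. The first step is to build, for each $\tau = 1, 2, \ldots$, a policy $\pi_\tau \in \Pi_\tau$ that dominates $\pi$. Starting from $\pi_0 := \pi \in \Pi_{\tau-1}$ with $\tau=1$, apply Lemma~\ref{lem:6} to obtain an intermediate policy in $\Pi_1^{u_2}$ dominating $\pi_0$, then apply Lemma~\ref{lem:7} to that policy to obtain $\pi_1 \in \Pi_1$ dominating it; since $\leq_{st}$ is transitive and the domination relation in Equation~(\ref{policy_dominance}) is just a pair of stochastic inequalities holding for every $f \in \mathcal F$ and every $t$, $\pi_1$ dominates $\pi$. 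Iterating, at step $\tau$ we have $\pi_{\tau-1}\in\Pi_{\tau-1}$, and the chain $\Pi_\tau \subseteq \Pi_\tau^{u_2}\subseteq \Pi_{\tau-1}$ lets us apply Lemma~\ref{lem:6} then Lemma~\ref{lem:7} to produce $\pi_\tau \in \Pi_\tau$ dominating $\pi_{\tau-1}$, hence (by transitivity) dominating $\pi$.

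**Passing to the limit.**
The second step is to produce a single MB policy dominating $\pi$ for \emph{all} $t$ simultaneously, using $\Pi^{MB} = \bigcap_{n=1}^\infty \Pi_n$. The key observation is that the construction in Lemmas~\ref{lem:6}--\ref{lem:7} only modifies the controls at slot $\tau$ (and couples the driving randomness), leaving the behavior at slots $t < \tau$ unchanged; consequently the sequence $\{\pi_\tau\}$ is consistent in the sense that $\pi_\tau$ and $\pi_{\tau'}$ agree on their first $\min(\tau,\tau')$ slots as measurable functions of the history. This lets us define a limiting policy $\pi^{MB}$ whose control at slot $t$ is the control used by $\pi_t$ at slot $t$; by construction $\pi^{MB}$ has the MB property at every slot, so $\pi^{MB}\in\bigcap_n \Pi_n = \Pi^{MB}$. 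For any fixed $t$, the queue-length vectors $\mathbf X^{\pi^{MB}}(t)$ and $\mathbf Y^{\pi^{MB}}(t)$ coincide with $\mathbf X^{\pi_t}(t)$ and $\mathbf Y^{\pi_t}(t)$ (they depend only on controls up to slot $t$), and $\pi_t$ dominates $\pi$, so $f(\mathbf X^{\pi^{MB}}(t)) \leq_{st} f(\mathbf X^{\pi}(t))$ and likewise for $\mathbf Y$, for every $f \in \mathcal F$. Since $t$ and $\pi$ were arbitrary, Equation~(\ref{dominance_theorem}) follows, and by the remark after Equation~(\ref{func_class}) it suffices to have exhibited the coupling sample-path-wise, which the lemmas do.

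**Main obstacle.**
The genuine content is entirely inside Lemmas~\ref{lem:6} and~\ref{lem:7}; the theorem itself is just bookkeeping of subset inclusions plus transitivity of $\leq_{st}$ plus a consistency argument for the limit. Within this proof, the one point that needs care is the limit step: one must check that the stepwise constructions can be performed on a \emph{common} probability space (the standard coupling space of \cite{Lindvall}) so that the sample-path dominations $\mathbf X^{\pi_\tau}(t)\preccurlyeq \mathbf X^\pi(t)$ accumulate coherently rather than each living on its own space, and that the limiting control map $g_t^{MB}$ is still a measurable function of $\mathbf H(t)$ and feasible (it is, being $g_t$ of the policy $\pi_t$, which lies in $\Pi_t \subseteq \Pi$). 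I would state this consistency explicitly as a short remark and then invoke the two lemmas as black boxes.
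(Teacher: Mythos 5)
Your proposal is correct and follows essentially the same route as the paper's proof: iterate Lemma~\ref{lem:6} and Lemma~\ref{lem:7} to build a sequence $\pi_n\in\Pi_n$ each dominating $\pi$ by transitivity of $\leq_{st}$, then pass to the limiting policy in $\Pi^{MB}=\bigcap_n\Pi_n$. Your explicit remark on the consistency of the sequence (that $\pi_\tau$ and $\pi_{\tau'}$ agree on the first $\min(\tau,\tau')$ slots, so the limit is well defined and $\mathbf X^{\pi^{MB}}(t)=\mathbf X^{\pi_t}(t)$) is a point the paper leaves implicit, and is a worthwhile addition.
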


\begin{proof}
%
Starting from an arbitrary policy $\pi$, we apply a series of modifications to $\pi$, using Lemmas \ref{lem:6} and \ref{lem:7},  that result in a sequence of policies ($\pi_1,\pi_2,\ldots$), such that:
(i)  policy $\pi_1 $ dominates the original arbitrary policy $\pi$,
 (ii)  $\pi_n \in \Pi_{n}$, in other words, policy  $\pi_n $ has the MB property during time slots $t=1,2,\ldots,n$, and,
 (iii)  $\pi_m $ dominates $\pi_n $ for $m>n$ (i.e., when $\pi_m$ has the MB property for a  period of time $m-n$ slots longer than $\pi_n$).

By definition,  $\pi$ is an arbitrary policy; therefore, $\pi \in \Pi_0 $.
We  construct a policy $\tilde{\pi} \in  \Pi_1^{u_2}$ that dominates $\pi$ according to Lemma \ref{lem:6}.
Using Lemma \ref{lem:7} we  construct a second policy  $ \pi_1 \in \Pi_1$ that has the MB property during time slot $t=1$ and dominates $\pi$. Repeating the construction steps above and using Lemmas \ref{lem:6} and \ref{lem:7} again   for time slots $t=2,3, \ldots$ will result in a sequence of  policies $\pi_n \in \Pi_n, n=2,3,\ldots$ that satisfy (i) -- (iii) above, i.e., each subsequently constructed policy has the MB property for one more time slot (than the previous one) and dominates all the previous policies including the original policy $\pi$.
%

Denote the limiting policy for the sequence of constructed policies as $n\longrightarrow \infty$ by $\pi^*$. In that case,  $\pi^* \in \Pi^{MB}$ since it has the MB property at all time. Furthermore, we can conclude from the previous construction that $\pi^*$ dominates $\pi_n$, for all $n < \infty$ including the original policy $\pi$. The theorem follows since the initial policy $\pi  \in \Pi$ is assumed to be  arbitrary.
%
\end{proof}

\vspace{-2mm}

\section{Conclusion}
In this work, we studied the  wireless relay networks optimization problem from dynamic packet scheduling perspective. We provided a queueing model for these networks that takes into consideration the randomness of the wireless channel connectivity. We introduced a class of packet scheduling policies, the \textit{most balancing} (MB) policies. We proved, using stochastic dominance and coupling method, that  MB policies dominate all other policies in that they minimize, in stochastic ordering sense, a class of cost functions of the system queue lengths  including the total number of packets in the system. 
We proposed an implementation algorithm and proved that it will produce a MB policy for the proposed queueing system.
The results presented in this article provide a concrete   understanding of the optimal scheduling policy structure in homogeneous wireless relay networks   multi-hop wireless network in general. 

\vspace{-2mm}
\appendix
\renewcommand{\thesection}{Appendix \Alph{section}}
\renewcommand{\theequation}{A-\arabic{equation}}

 \setcounter{equation}{0}  
 \renewcommand{\thefigure}{A-\arabic{figure}}

 \setcounter{figure}{0}  

\section{Proof for Lemma \ref{lem:alg1c} in Section \ref{sec:MB_Impl}} \label{appendixB1}
 
In this section, we present  the full proof for Lemma \ref{lem:alg1c}. This lemma quantifies the effect of performing a balancing interchange on the imbalance index $\kappa(\mathbf x)$ of a vector $\mathbf x$. 
%

\begin{proof}[Proof for Lemma \ref{lem:alg1c}]
 To prove this lemma,  we first show that: 
\begin{eqnarray}\label{eq:lemB1_1}
\sum_{i'=1}^{L} \sum_{j'=i'+1}^{L+1} ( x^*_{i'} - x^*_{j'} ) = \sum_{i=1}^{L} \sum_{j=i+1}^{L+1} ( x_{i} -x_{j} ) 
 -2(s-l) \cdot \mathds{1}_{\{ x_{l} \geq x_{s} +2  \}}
\end{eqnarray} 
Then according to Equation (\ref{eq:kappa}), the above is equivalent to  Equation (\ref{eq:lem1}) and  Lemma \ref{lem:alg1c} follows.

We generate the vector $\mathbf x^*$  by performing
a \textit{balancing interchange} of two components, $l$ and $s$ (i.e., the $l^{th}$ and the $s^{th}$ largest components), in the vector $\mathbf x$ and reorder the resulted vector in descending manner. The resulted vector $\mathbf x^*$ is characterized by the following:
\begin{eqnarray}\label{eq:lemB1_2}
	  x^*_{l'}= x_{l}-1, \quad x^*_{s'}= x_{s}+1, \quad  x_{l}> x_{s} 
	   x^*_{k}= x_{k}, \quad  \forall k \neq l,s, l',s'
\end{eqnarray}
where $l'$ (respectively $s'$) is the new index  (i.e., the order in the new vector $\mathbf x^*$) of   component $l$ (respectively $s$) in the original vector $\mathbf x$.



From Equation (\ref{eq:lemB1_2}) we can identify $L-2$ elements that have the same magnitude in the two  vectors $\mathbf x$ and $\mathbf x^*$. Therefore, the sum of differences between these $L-2$ elements in both vectors will also be the same, i.e.,

\begin{equation}
\sum_{\substack{i'=1\\ i' \notin  \{l',s'\}} }^{L} \sum_{\substack{j'=i'+1 \\ j' \notin  \{l',s'\}} }^{L+1} ( x^*_{i'} - x^*_{j'} ) = \sum_{\substack{i=1 \\i  \notin  \{l,s\}} }^{L} \sum_{\substack{j=i+1 \\ j  \notin  \{l,s\}} }^{L+1} ( x_{i} -x_{j} )
 \label{eq:lemB1_3}
\end{equation}

We calculate the sums for the remaining terms  (i.e., when at least one of the indices $i,j$ belongs to $ \{l,s\}$ and/or $i',j'$ belongs to $ \{l',s'\}$) next.  
We first assume that  $x_{l} \geq x_{s}+2 $; in this case, we can easily show that $l' \leq s'$. Then, we have the following five, mutually exclusive, cases to consider:

\begin{enumerate}
\item  When $i'=l', i=l, j'=s'$ and $j=s$. This case occurs only once, i.e., when decomposing the double sum in Equation (\ref{eq:lemB1_1}) we can find only one term that satisfies this case. From Equation (\ref{eq:lemB1_2}) we have: $x^*_{l'} - x^*_{s'}  = x_{l} -x_{s} -2$.
%
\item When $i'=l', i=l,j'\neq s'$ and $ j \neq s$. There are $L-l$ terms that satisfy this case. Analogous to  case 1) we determined that: $x^*_{l'} - x^*_{j'}  = x_{l} -x_{j} -1$.
%
\item When $i' \neq l', i \neq l, j'=s' $ and $ j = s$. There are $s-2$ terms that satisfy this case. In this  case we can show that: $x^*_{i'} - x^*_{s'}  = x_{i} -x_{s} -1$.
%
\item When $i'\neq l',s', i \neq l,s,  j'=l'$ and $ j = l$. There are $l-1$ terms that satisfy this case. In this  case we can show that: $x^*_{i'} - x^*_{l'}  = x_{i} -x_{l} +1$.
%
\item When $i'=s', i =s, j'\neq l',s' $ and $ j \neq l,s$. There are $L-s+1$ terms that satisfy this case. In this  case we have: $x^*_{s'} - x^*_{j'}  = x_{s} -x_{j} +1$.
\end{enumerate}
The above cases  cover all the terms in Equation (\ref{eq:lemB1_1}) when $x_{l} \geq x_{s}+2 $. Combining all these terms yields:
\begin{eqnarray}
 \sum_{i'=1}^{L} \sum_{j'=i'+1}^{L+1} ( x^*_{i'} - x^*_{j'} ) &\hspace{-4mm} =\hspace{-4mm} & \sum_{i=1}^{L} \sum_{j=i+1}^{L+1} ( x_{i} -x_{j} ) 
 -2\cdot (1) \!-\!1 \!\cdot\! (L\!-\!l) \!-\!1 \!\cdot\! (s\!-\!2) \!+\! 1\! \cdot\! (l\!-\!1)\! +\! 1 \!\cdot\! (L\!-\!s\!+\!1) \nonumber \\
&\hspace{-4mm} =\hspace{-4mm} & \sum_{i=1}^{L} \sum_{j=i+1}^{L+1} ( x_{i} -x_{j} ) -2 (s-l) 
 \label{eq:lemB1_9}
\end{eqnarray}

Furthermore, if $x_{l} =x_{s}+1 $, then from Equation (\ref{eq:lemB1_2}) it is  clear that $x^*_{l'} =x_{s}$ and $x^*_{s'} =x_{l}$, i.e., the resulted vector is a permutation of the original one. Therefore,  the sum of differences will be the same in both vectors and Equation (\ref{eq:lemB1_1}) will be reduced to
\begin{equation}\label{eq:lemB1_91}
\sum_{i'=1}^{L} \sum_{j'=i'+1}^{L+1} ( x^*_{i'} - x^*_{j'} ) = \sum_{i=1}^{L} \sum_{j=i+1}^{L+1} ( x_{i} -x_{j} )
\end{equation}
Equation (\ref{eq:lemB1_1}) follows from Equations (\ref{eq:lemB1_9}) and (\ref{eq:lemB1_91}). 
\end{proof}

\renewcommand{\theequation}{B-\arabic{equation}}
 \setcounter{equation}{0}  

\section{Proof for Lemmas \ref{lem:6} and \ref{lem:7}} \label{appendixB}

\renewcommand{\thesection}{ \Alph{section}}

To prove  Lemmas \ref{lem:6} and  \ref{lem:7}  we use stochastic coupling arguments. We start by introducing the coupling method briefly.


In order to compare probability measures on a measurable space, it is often possible to construct random elements on a common probability space with these measures as their distributions, such that this comparison can be conducted in terms of these random elements rather than the probability measures. Such construction  is often referred to as \emph{stochastic coupling} \cite{Lindvall}.
In the notation of  \cite{Lindvall}, a formal definition of coupling of two probability measures on the measurable space $(E, \mathcal{E})$ (the state space, e.g., $E=\mathcal{R},\mathcal{R}^d, \mathcal{Z}_+, etc.$) is given below. 

A random element in $(E, \mathcal{E})$ is a quadruple $(\Omega,\mathscr{F},\mathbf{P},X)$, where  $(\Omega,\mathscr{F},\mathbf{P})$ is the sample space and $X$ is the class of measurable mappings from $\Omega$ to $E$ ($X$ is an $E$-valued random variable, s.t. $X^{-1}(B) \in \mathscr{F}$ for all $B \in \mathcal{E} $).

\noindent\textbf{Definition:} \textit{A coupling of the  two random elements $(\Omega,\mathscr{F},\mathbf{P},\mathbf{X})$ and $(\Omega',\mathscr{F}',\mathbf{P}',\mathbf{X}')$ in $(E, \mathcal{E})$ is a random element
$(\hat{\Omega},\hat{\mathscr{F}},\hat{\mathbf{P}},(\hat{\mathbf{X}},\hat{\mathbf{X}} ') )$ in ($E^2, \mathcal{E}^2$) such that}
\begin{equation} \label{def2:coupling}
     \mathbf{X} \eqd \hat{\mathbf{X}} \quad \text{ and } \quad \mathbf{X}'\eqd \hat{\mathbf{X}}',
\end{equation}
\textit{where $\eqd$ denotes 'equal in distribution'}.


Stochastic coupling was initially used by mathematicians  to prove properties for stochastic processes. Later on, coupling methods proved to be handy in proving optimality results for dynamic control of queueing systems. cf. \cite{Walrand}, \cite{Nain}, \cite{Tassiulas}, \cite{ganti} and many others.

In the proof of  Lemmas \ref{lem:6} and \ref{lem:7}  we apply the coupling method  as follows: For the scheduling policy  $\pi$, let $\omega$  be a given sample path of the system state process. 
We construct a new sample path, $\tilde \omega$ and a new policy $\tilde{\pi}$. The details of this construction is given in the proof below. 
To put things into perspective, in the  coupling definition (Equation (\ref{def2:coupling})), $\hat{\omega}=(\omega, \tilde{\omega})$ and the ``coupled'' processes of interest in Equation (\ref{def2:coupling}) will be the SS queue sizes $\hat{\mathbf{X}} = \{\mathbf X(n)\}$ and $\hat{\mathbf{X}}' = \{\mathbf {\tilde X}(n)\}$ as well as the RS queue sizes $\hat{\mathbf{Y}} = \{\mathbf Y(n)\}$ and $\hat{\mathbf{Y}}' = \{\mathbf{\tilde Y}(n)\}$.

The scheduling policy selects three control elements at every time slot, namely $u_1(t),u_2(t)$ and $u_3(t)$. 
The detailed construction of  policy $\tilde{\pi}$ is described in the proof below.
Using Equations (\ref{eq:EvoX}) and (\ref{eq:EvoY}), We can compute the new queue states $\mathbf{x}(\cdot), \mathbf{y}(\cdot)$  under $\pi$ and  $\mathbf{\tilde{x}}(\cdot), \mathbf{\tilde{y}}(\cdot)$  under $\tilde \pi$. Our goal is to prove that the two relations
\begin{eqnarray} 
    \mathbf{\tilde{x}}(t)\preccurlyeq  \mathbf{x}(t)  \label{prefered1} \\
    \mathbf{\tilde{y}}(t)\preccurlyeq  \mathbf{y}(t)  \label{prefered2}
\end{eqnarray}
are satisfied for all $t$. This will insure the dominance of policy $\tilde \pi$ over $\pi$.
%
A queue length vector $\mathbf{\tilde{x}}$ is preferred over $ \mathbf{x}  $  (i.e., $ \mathbf{\tilde{x}} \preccurlyeq \mathbf{x} $) \textit{iff} one of the  statements   S1, S2 or S3 (in Section \ref{PreferredOrder-section}) holds. 



\begin{proof}[Proof for Lemma \ref{lem:6}]
To prove this lemma, we start from an arbitrary policy $\pi \in \Pi_{\tau-1}$ and a sample path $\omega=(\mathbf{x}(1),\mathbf{y}(1),\mathbf{c^s}(1),\mathbf{c^r}(1),\mathbf{a}(1),\ldots )$.
%
 The proof proceeds in two parts; in Part 1, we construct the sample path $\tilde \omega$ and the policy $\tilde{\pi}$ (as stated by Lemma \ref{lem:6}) for times up to $t=\tau$. In Part 2, we do the same for $t>\tau$.

\vspace{2mm}
\noindent\emph{\textbf{Part 1:}}
For time $t<\tau$ we  construct  $\tilde \omega$ to coincide with $\omega$, i.e., $\mathbf{\tilde{a}}(t)=\mathbf{a}(t)$, $\mathbf{\tilde{c}^s}(t)=\mathbf{c^s}(t)$ and $\mathbf{\tilde{c}^r}(t)=\mathbf{c^r}(t)$ for all $t<\tau$. We construct $\tilde{\pi}$ such that  $\tilde{ \mathbf{u}}(t) = \mathbf{u}(t)$ for all $t<\tau$. Then the resulting queue lengths under both policies are the same, i.e., $\mathbf{\tilde{x}}(\tau)=\mathbf{x}(\tau)$ and $\mathbf{\tilde{y}}(\tau)=\mathbf{y}(\tau)$.

At  $t=\tau$,  
let $\mathbf{\tilde{c}^s}(\tau)=\mathbf{c^s}(\tau)$,  $\mathbf{\tilde{c}^r}(\tau)=\mathbf{c^r}(\tau)$ and $\mathbf{\tilde{a}}(\tau)=\mathbf{a}(\tau)$.  
We construct $\tilde \pi$ at $t = \tau$ by selecting $\tilde u_1(\tau)$, $\tilde u_2(\tau)$ and $\tilde u_3(\tau)$ as follows:

1- \textit{Construction of} $\tilde u_1(\tau)$. We have the following two cases to consider:

(i) The scheduling control $\mathbf u(t)$ satisfies Equation (\ref{eq:lcqsqlcq1}) at $t=\tau$, i.e., 
 $u_1(\tau) = l^s : l^s \in \argmax_{i\in \mathcal L: c^s_i(\tau)=1} x_i(\tau)$. 
Then we  set $\tilde u_1(\tau) = u_1(\tau)$. Note that $\tilde u_1(\tau)$ and $u_1(\tau)$  affect the SS queue lengths only and have no effect on the RS queue sizes. It follows that  the resulting SS queue lengths $\mathbf{\tilde{x}}(\tau+1)= \mathbf{x}(\tau+1)$ ($\tilde a(\tau)= a(\tau)$ by construction), property (S1)  holds true for the SS queue length vector and (\ref{prefered1}) is satisfied at $t=\tau+1$.

(ii) The scheduling control $\mathbf u(t)$ does not satisfy Equation (\ref{eq:lcqsqlcq1}) at $t=\tau$. Then we set $\tilde u_1(\tau) = l^s : l^s \in \argmax_{i\in \mathcal L: \tilde c^s_i(\tau)=1} \tilde x_i(\tau).$
 Keeping the construction of $\tilde \omega$ in mind, we conclude the following (we suppress the time argument for the subscript to simplify notation):
\begin{eqnarray}\label{eq:lem6_1}
		\hat{\tilde{x}}_{\tilde u_1}(\tau)= \hat{x}_{\tilde u_1}(\tau ) -1, \quad \hat{\tilde{x}}_{ u_1}(\tau)= \hat{x}_{ u_1}(\tau ) +1, \quad
		\text{\textit{where}} \quad \hat{x}_{\tilde u_1}(\tau ) > \hat{x}_{ u_1}(\tau ) 
\end{eqnarray}

From Equation (\ref{eq:lem6_1}) and the construction of the exogenous arrivals 
 we conclude that  property (S3)  holds true for the SS queue length vector 
and (\ref{prefered1}) is satisfied at $t=\tau+1$.

2- \textit{Construction of} $\tilde u_2(\tau)$ and  $\tilde u_3(\tau)$. We have the following two cases to consider:

(i) The scheduling control $\mathbf u(t)$ satisfies Equation (\ref{eq:lcqsqlcq2}) at $t=\tau$, i.e., 
$ u_2(\tau) = s^r : s^r \in \argmax_{i\in \mathcal I} c^r_i(\tau), $ 
  where $ \mathcal I =\argmin_{j\in  \{1,\ldots, K\} } y_j(\tau).$ 
Then we  set $\tilde u_2(\tau) = u_2(\tau)$ and $\tilde u_3(\tau) = u_3(\tau)$. The resulting RS queue sizes $\mathbf{\tilde{y}}(\tau+1)= \mathbf{y}(\tau+1)$. Property (S1)  holds true for the RS queue length vector and (\ref{prefered2}) is satisfied at $t=\tau+1$.

(ii) The scheduling control $\mathbf u(t)$ does not satisfy Equation (\ref{eq:lcqsqlcq2}) at $t=\tau$. Then we set $ \tilde u_2(\tau) = s^r : s^r \in \argmax_{i\in \mathcal I} c^r_i(\tau),$ 
where the set $\mathcal I$ is defined in case (i) above. We also set  $\tilde u_3(\tau) = u_3(\tau)$.
In this case, the RS queue lengths satisfy (for all feasible selections of $u_3(\tau)$) the following:
\begin{eqnarray}\label{eq:lem6_2}
	 	{\tilde{y}}_{\tilde u_2}(\tau \! + \! 1)=  {y}_{\tilde u_2}(\tau \! + \! 1 ) +1, \,   {\tilde{y}}_{ u_2}(\tau \! + \! 1)=   {y}_{ u_2}(\tau \! + \! 1 ) -1,  \,\,
	 	\text{\textit{where}} \,\, {y}_{\tilde u_2}(\tau +1) < {y}_{ u_2}(\tau +1) .
\end{eqnarray}

 Equation (\ref{eq:lem6_2}) suggests that $\mathbf{\tilde{y}}(\tau+1)$ is obtained from $ \mathbf{y}(\tau+1)$ by performing a balancing interchange of two components $\tilde u_2(\tau)$ and $  u_2(\tau)$. In this case, property (S3)  holds true for the RS queue length vector and (\ref{prefered2}) is satisfied at $t=\tau+1$.

In cases (1-) and (2-) above, we constructed the policy $\tilde \pi$ for time slot $t= \tau$. We also showed that Relations (\ref{prefered1}) and (\ref{prefered2}) are satisfied for time slot $t= \tau+1$. The above concluded the construction of policy $\tilde \pi$ upto time slot $t=\tau$. Next (in Part 2), we will construct $\tilde \pi$ for time slots $t>\tau$. Furthermore, starting from a preferred state at $t=\tau+1$, we will show using forward induction  that relations (\ref{prefered1}) and (\ref{prefered2}) are satisfied for all time slots $t> \tau$.

\vspace{2mm}\noindent\emph{\textbf{Part 2:}}
We  use induction to complete part 2 of our proof. The sample path $\tilde \omega$ and the policy $\tilde \pi$ are already defined for $t\leq \tau$.  To complete the induction argument, 
we assume that $\tilde{\pi}$ and $\tilde{\omega}$ are defined up to time $n-1 \geq \tau $ and  that relations (\ref{prefered1}) and (\ref{prefered2}) are satisfied at $t=n$, i.e., $\mathbf{\tilde{x}}(n) \preccurlyeq \mathbf{x}(n)$ and $\mathbf {\tilde y}(n)\preccurlyeq  \mathbf y(n)$. 
We will show that at time slot $n$, $\tilde{\pi}$ can be constructed such that relations (\ref{prefered1}) and (\ref{prefered2}) are satisfied at $t=n+1$. To do that, we have to show that either (S1), (S2) or (S3) holds for  $\mathbf x(t)$ and $\mathbf y(t)$ at time slot $t=n+1$.

We consider next  three cases that  correspond  to properties (S1), (S2) and (S3) of the vector $\mathbf x(n)$. For each one of these cases, we consider three sub-cases that correspond to properties (S1), (S2) and (S3) of the vector $\mathbf y(n)$.

1-  $\mathbf{\tilde{x}}(n) \leq \mathbf{x}(n)$ (i.e., property (S1) holds for $\mathbf x(t)$). 
We set  $\mathbf{\tilde{a}}(n)=\mathbf{a}(n)$  and $\mathbf{\tilde{c}^s}(n)=\mathbf{c^s}(n)$.  We set $\tilde{u}_1(n)= u_1(n)$. The SS queue lengths satisfy (S1), i.e., $\mathbf{\tilde{x}}(n+1) \leq \mathbf{x}(n+1)$,  and  (\ref{prefered1}) holds at $t=n+1$. The  controls $\tilde{u}_2(n)$ and $\tilde{u}_3(n)$ are construction below and the RS queue lengths are computed as follows:

(a) $\mathbf{\tilde{y}}(n) \leq \mathbf{y}(n)$ (i.e., property (S1) holds for $\mathbf y(t)$). 
We set  $\mathbf{\tilde{c}^r}(n)=\mathbf{c^r}(n)$. We also set the controls $\tilde{u}_2(n)= u_2(n)$ and $\tilde{u}_3(n)= u_3(n)$. In this case, (S1) is satisfied again and Relation (\ref{prefered2}) holds at $t=n+1$.

(b) $\mathbf{\tilde{y}}(n)$ is a 2-component permutation of $\mathbf{y}(n)$ (i.e., property (S2) holds for $\mathbf y(t)$). 
Let RS queues $i$ and $j$ be the indices of the two permuted queues. Then let $ \tilde{c}^r_i (n)= c^r_j (n)$, $\tilde{c}^r_j (n)= c^r_i (n)$ and $ \tilde{c}^r_k (n)= c^r_k (n), \forall k \neq i,j$. We construct the controls $u_2(n)$ and  $u_3(n)$  as follows:
\begin{equation}\label{eq:lem6_3}
 	\tilde u_2(n) =
 	\begin{cases}
 			i & \text{if } u_2(n) = j \\
 			j & \text{if } u_2(n) = i \\
 			k & \text{if } u_2(n) = k, \forall k \neq i,j
 	\end{cases}
 	\qquad 
 	\tilde u_3(n) =
 	\begin{cases}
 			i & \text{if } u_3(n) = j \\
 			j & \text{if } u_3(n) = i \\
 			k & \text{if } u_3(n) = k, \forall k \neq i,j
 	\end{cases}
\end{equation}
%

From the construction of $\tilde \pi$, it can be easily shown that property (S2) is satisfied again for RS queues at time $t=n+1$ and (\ref{prefered2}) follows. 

(c) $\mathbf{\tilde{y}}(n)$ is obtained from $\mathbf{y}(n)$ by performing a balancing interchange as described by Equation (\ref{eq:BI}) (i.e., property (S3) holds for $\mathbf y(t)$). 
Let $i$ and $j$ be the indices of the two RS queues involved in the balancing interchange, such that $y_i(n)\geq y_j(n)+1$. We consider the following two cases:

 (i)  $y_i(n)=y_j(n)+1$. Then $\tilde y_i(n)=y_j(n) $ and $\tilde y_j(n)=y_i(n) $. This case corresponds to case (1b) above and the same construction of $\tilde \omega$ and $\tilde \pi$ apply. The resulted queue lengths at $t=n+1$ will satisfy property (S2) and (\ref{prefered2}) follows.

 (ii)  $y_i(n) > y_j(n)+1$. We set $\mathbf{\tilde c^r}(n) = \mathbf{c^r}(n)$ and $\tilde u_2(n)=u_2(n)$.

 \textit{If} ``$c^r_j(n)=1$,  $c^r_k(n)=0 , \forall k \neq j$ and $ y_j(n)\leq 0$''\footnote{Note that $j=0$, the `dummy' queue, is not excluded; hence, $ y_j(n)< 0$ is possible in this particular case.}   (i.e.,  queue $j$ is the only connected RS queue which happens to be empty),  then $u_3(n) =0$ (i.e., forced idling) according to feasibility constraint (\ref{eq:cons2}). Then we set $\tilde u_3(n) =j$, which is a feasible control since $\tilde y_j(n) = y_j(n) + 1 $ according to Equation (\ref{eq:BI}). The resulted RS queue length vector in this case satisfies (S1), i.e., $\mathbf{\tilde y} (n+1) \leq  \mathbf{ y} (n+1)$, and   (\ref{prefered2}) follows. 
 
 \textit{Else}, i.e., $ y_j(n) > 0$ and/or there are other connected queues in the stack, then we set $\tilde u_3(n) =u_3(n)$. This action  preserve property (S3) and Equation (\ref{prefered2}) hold at $t= n+1$.

This concludes the construction of $\tilde \omega$ and $\tilde \pi$ for case (1) during time slot $t=n$.

2-  $\mathbf{\tilde{x}}(n)$ is a 2-component permutation of $\mathbf{x}(n)$ (i.e., property (S2) holds for $\mathbf x(t)$). 
 
Let   $i$ and $j$ be the indices of the two permuted SS queues. Then let $ \tilde{c}^s_i (n)= c^s_j (n)$, $\tilde{c}^s_j (n)= c^s_i (n)$ and $ \tilde{c}^s_k (n)= c^s_k (n), \forall k \neq i,j$. Similarly, $ \tilde{a}_i (n)= a_j (n)$, $\tilde{a}_j (n)= a_i (n)$ and $ \tilde{a}_k (n)= a_k (n), \forall k \neq i,j$.
 We construct the control $u_1(n)$ as follows:
\begin{equation}\label{eq:lem6_5}
 	\tilde u_1(n) =
 	\begin{cases}
 			i & \text{if } u_1(n) = j \\
 			j & \text{if } u_1(n) = i \\
 			k & \text{if } u_1(n) = k, \forall k \neq i,j
 	\end{cases}
\end{equation}

From Equation (\ref{eq:lem6_5}), it can be easily shown that, at time $t=n+1$, property (S2) holds again for  $\mathbf{\tilde{x}}(n+1)$ and $\mathbf{x}(n+1)$, and (\ref{prefered1}) is satisfied. Analogous to case (1-) above,  we consider three cases for the construction of $u_2(n)$ and $u_3(n)$ that correspond  to (S1), (S2) and (S3) properties of the vector $\mathbf y(n)$.
%
%
%
%
The construction of  $\tilde \omega$ and $\tilde \pi$ in all three cases is analogous to that presented in cases (1a), (1b) and (1c) respectively, and the resulted RS queue length vector $\mathbf{\tilde y}(n+1)$ satisfies (\ref{prefered2}) at $t=n+1$.

3-  $\mathbf{\tilde{x}}(n)$ is obtained from $\mathbf{x}(n)$ by performing a balancing interchange as described by Equation (\ref{eq:BI}) 
Let $i$ and $j$ be the indices of the two SS queues involved in the balancing interchange, such that $x_i(n)\geq x_j(n)+1$.  We consider the following two cases:

 (i)  $x_i(n)=x_j(n)+1$. Then $\tilde x_i(n)=x_j(n) $ and $\tilde x_j(n)=x_i(n) $. This case corresponds to case (2-) above and the same construction of $\tilde \omega$ and $\tilde \pi$ apply. The resulted queue lengths at $t=n+1$ will satisfy property (S2) and (\ref{prefered2}) follows.

 (ii)  $x_i(n) > x_j(n)+1$. We set   $\mathbf{\tilde a} (n)= \mathbf{ a} (n)$ and $\mathbf{\tilde c^s} (n)= \mathbf{ c^s} (n)$.

 \textit{If} ``$c^s_j(n)=1$,  $c^s_k(n)=0 , \forall k \neq j$ and $ x_j(n)\leq 0$''
  (i.e.,  queue $j$ is the only connected SS queue, which happens to be empty),  then $u_1(n) =0$ (i.e., forced idling) according to feasibility constraint (\ref{eq:cons2}). Then we set $\tilde u_1(n) =j$. This  is a feasible control since $\tilde x_j(n) = x_j(n) + 1 $ according to Equation (\ref{eq:BI}). The resulted SS queue length vector in this case satisfies (S1), i.e., $\mathbf{\tilde x} (n+1) \leq  \mathbf{ x} (n+1)$, and   (\ref{prefered1}) is satisfied at $t=n+1$. 
 
 \textit{Else}, i.e., if $ x_j(n) > 0$ and/or there are other connected queues in the stack, then we set $\tilde u_1(n) =u_1(n)$. This action  preserve property (S3) and Equation (\ref{prefered1}) is satisfied at $t= n+1$.

In this case, as with the previous cases, there are three cases to consider for the construction of $u_2(n)$ and $u_3(n)$ which correspond  to (S1), (S2) and (S3) properties of the vector $\mathbf y(n)$.
Again, the construction of  $\tilde \omega$ and $\tilde \pi$ in these cases is analogous to that presented  in cases (1a), (1b) and (1c). The same conclusion regarding the resulted   $\mathbf{\tilde y}(n+1)$ is drawn.

This concludes  the construction of the policy $\tilde \pi$ at $t=n$, for $n> \tau$. We have shown that this policy resulted in queue length vectors $\mathbf x(n+1)$ and  $\mathbf y(n+1)$ that satisfy Equations (\ref{prefered1}) and  (\ref{prefered2}) respectively. Using forward induction we conclude that these equations are satisfied for all $t$.
 Note that policy $\tilde{\pi} \in \Pi_{\tau}^{u_2}$ by construction in Part 1; its dominance over $\pi$  follows  from relation (\ref{func_class}).
\end{proof}

\begin{proof}[Proof for Lemma \ref{lem:7}]
The proof of this lemma is analogous to that of Lemma \ref{lem:6}. The two proofs differ in the first part, where the policy is constructed for $t\leq \tau$. The second part of the proof is the same and will not be repeated.
We start from an arbitrary policy $\pi \in \Pi_{\tau}^{u_2}$ and a sample path $\omega=(\mathbf{x}(1),\mathbf{y}(1),\mathbf{c^s}(1),\mathbf{c^r}(1),\mathbf{a}(1),\ldots )$.
%
 The proof proceeds in two parts; In part 1, we construct the sample path $\tilde \omega$ and the policy $\tilde{\pi}$ (as stated by Lemma \ref{lem:7}) for  $t\leq \tau$. In part 2, we do the same for $t >\tau$.

\vspace{2mm}\noindent\emph{\textbf{Part 1:}}
For time $t<\tau$ we  construct  $\tilde \omega$ to coincide with $\omega$, i.e., $\mathbf{\tilde{a}}(t)=\mathbf{a}(t)$, $\mathbf{\tilde{c}^s}(t)=\mathbf{c^s}(t)$ and $\mathbf{\tilde{c}^r}(t)=\mathbf{c^r}(t)$ for all $t<\tau$. We construct $\tilde{\pi}$ such that  $\tilde{ \mathbf{u}}(t) = \mathbf{u}(t)$ for all $t<\tau$. In this case, the resulting queue lengths under both policies  at $t = \tau$ are the same, i.e., $\mathbf{\tilde{x}}(\tau)=\mathbf{x}(\tau)$ and $\mathbf{\tilde{y}}(\tau)=\mathbf{y}(\tau)$.

At time slot $t=\tau$,  let  $\mathbf{\tilde{c}^s}(\tau)=\mathbf{c^s}(\tau)$,  $\mathbf{\tilde{c}^r}(\tau)=\mathbf{c^r}(\tau)$ and $\mathbf{\tilde{a}}(\tau)=\mathbf{a}(\tau)$.  
Then for the construction of $\tilde \pi$ at $t = \tau$,  we set $\tilde u_1(\tau)= u_1(\tau)$ and  $\tilde u_2(\tau)= u_2(\tau)$, where $u_1(\tau)$ and $u_2(\tau)$ are most balancing controls. Then property (S1)  holds true for the SS queue length vector and (\ref{prefered1}) is satisfied at $t=\tau+1$. We construct  $\tilde u_3(\tau)$ as follows:

(i) If the scheduling control $\mathbf u(t)$ satisfies Equation (\ref{eq:lcqsqlcq3}) at $t=\tau$, i.e., 
$$u_3(\tau) =  l^r : l^r \in \argmax_{j\in \mathcal K: c^r_j(\tau)=1} (y_j(\tau) + v^r_j(\tau) ).$$
Then we  set $\tilde u_3(\tau) = u_3(\tau)$. Since $\tilde u_2(\tau) = u_2(\tau)$ by construction, it follows that  the resulting RS queue lengths $\mathbf{\tilde{y}}(\tau+1)= \mathbf{y}(\tau+1)$; property (S1)  holds true for the RS queue length vector and (\ref{prefered2}) is satisfied at $t=\tau+1$.

(ii) The scheduling control $\mathbf u(t)$ does not satisfy Equation (\ref{eq:lcqsqlcq3}) at $t=\tau$. Then we set $$\tilde u_3(\tau) = l^r : l^r \in \argmax_{j\in \mathcal K: c^r_j(\tau)=1} (y_j(\tau) + v^r_j(\tau) ).$$
 The RS queue lengths in this case satisfies the following (we suppress the time argument for the subscript to simplify notation):
\begin{eqnarray}\label{eq:lem6_10}
	&& \hspace{-8mm}	 {\tilde{y}}_{\tilde u_3}(\tau \! + \! 1)=  {y}_{\tilde u_3}(\tau \!  + \! 1) -1, \quad  {\tilde{y}}_{ u_3}(\tau \! + \! 1)=  {y}_{ u_3}(\tau \! + \! 1 ) +1, \nonumber \\
	&& \hspace{-8mm}	\text{\textit{where}} \quad  {y}_{\tilde u_3}(\tau +1) >  {y}_{ u_3}(\tau +1) 
\end{eqnarray}

From Equation (\ref{eq:lem6_10}) we conclude that  property (S3)  holds true, i.e., $\mathbf{\tilde{y}}(\tau+1)$ is obtained from $ \mathbf{y}(\tau+1)$ by performing a balancing interchange of two components $\tilde u_3(\tau)$ and $  u_3(\tau)$, and (\ref{prefered2}) is satisfied at $t=\tau+1$.

The above concludes the construction of  policy $\tilde \pi$ for time slot $t \leq \tau$. By construction, $\tilde \pi$ has the MB property during time slot $\tau$. We  showed that Relations (\ref{prefered1}) and (\ref{prefered2}) are satisfied for time slot $t= \tau+1$.  Next (in part 2), we will construct $\tilde \pi$ for time slots $t>\tau$. Furthermore, starting from a preferred state at $t=\tau+1$, we will show using forward induction  that Relations (\ref{prefered1}) and (\ref{prefered2}) are satisfied for all time slot $t> \tau$.

\vspace{2mm}\noindent\emph{\textbf{Part 2:}}
we assume that $\tilde{\pi}$ and $\tilde{\omega}$ are defined up to time $n-1 \geq \tau $ and  that relations (\ref{prefered1}) and (\ref{prefered2}) are satisfied at $t=n$, i.e., $\mathbf{\tilde{x}}(n) \preccurlyeq \mathbf{x}(n)$ and $\mathbf {\tilde y}(n)\preccurlyeq  \mathbf y(n)$. 
We will show that at time slot $n$, $\tilde{\pi}$ can be constructed such that relations (\ref{prefered1}) and (\ref{prefered2}) are satisfied at $t=n+1$. 

There are three cases to be considered. These cases correspond  to properties (S1), (S2) and (S3) of the vector $\mathbf x(n)$. For each one of these cases, we consider three sub-cases that correspond to properties (S1), (S2) and (S3) of the vector $\mathbf y(n)$. The construction of $\tilde \omega$ and $\tilde \pi$ for all these cases is the same as the construction carried out in Part 2 of the proof for Lemma \ref{lem:6} and it will not be repeated here. Same conclusions are valid here, i.e., relations (\ref{prefered1}) and (\ref{prefered2}) are satisfied at $t=n+1$.

Part 2 above provide a complete description  of the policy $\tilde \pi $ at $t=n$, for some $n> \tau$. This policy resulted in queue length vectors $\mathbf x(n+1)$ and  $\mathbf y(n+1)$ that satisfy Equations (\ref{prefered1}) and  (\ref{prefered2}) respectively. Using forward induction, we prove that these equations are satisfied for all $t>\tau$. Part 1 and Part 2 above constructs the policy $\tilde \pi $ for all $t$  such that the preferred order is preserved.
 Note that policy $\tilde{\pi} \in \Pi_{\tau}$ by construction in Part 1; its dominance over $\pi$  follows  from relation (\ref{func_class}).
\end{proof}


\renewcommand{\baselinestretch}{1.1}


\end{document}